\documentclass[reqno, 12pt]{amsart}


\usepackage{amssymb,amsrefs}							
\usepackage[T1]{fontenc}								
\usepackage[english]{babel}							
\usepackage[onehalfspacing]{setspace}					
\usepackage[text={6.6in,8.6in}]{geometry}			
\usepackage{fancyhdr,times,mathtools}					
\usepackage[all]{xy}									
\usepackage{graphicx}		 				  			
\usepackage{pinlabel}									
\usepackage[colorlinks,bookmarks=false]{hyperref}	
\numberwithin{equation}{section}						
\usepackage{caption}									
\usepackage[nameinlink,noabbrev]{cleveref}			
\usepackage[all]{hypcap}								
\usepackage{autonum}


\newtheorem*{main}{Main Theorem}
\newtheorem*{main2}{Main Theorem for the large area limit}
\newtheorem{thm}{Theorem}[section]
\newtheorem{lem}[thm]{Lemma}
\newtheorem{cor}[thm]{Corollary}

\newtheorem{example}[thm]{Example}
\crefname{thm}{Theorem}{Theorems}				
\creflabelformat{thm}{#2{#1}#3}					
\crefname{ineq}{inequality}{inequalities}		
\creflabelformat{ineq}{#2{\upshape(#1)}#3}		
\crefname{diag}{diagram}{diagrams}				
\creflabelformat{diag}{#2{\upshape(#1)}#3} 		

\makeatletter
\renewcommand*{\eqref}[1]{\hyperref[{#1}]{\textup{\tagform@{\ref*{#1}}}}}		
\makeatother


\let\originalleft\left
\let\originalright\right
\renewcommand{\left}{\mathopen{}\mathclose\bgroup\originalleft}
\renewcommand{\right}{\aftergroup\egroup\originalright}


  \def\del{\overline \partial} 		
  \def\vor{\upsilon} 					
  \def\ep{\varepsilon}					
  \def\dist{\mathrm{dist}}			
  \def\Ar{\mathrm{Area}}				
  \def\supp{\mathrm{supp}}			
  \def\Sym{\mathrm{Sym}}				
  \def\U{\mathrm{U}}					


\def\cx{\mathbb C}
\def\rl{\mathbb R}
\def\Z{\mathbb Z}


\def\A{\mathcal A}
\def\C{\mathcal C}
\def\D{\mathcal D}
\def\E{\mathcal E}
\def\G{\mathcal G}
\def\K{\overline{\mathcal K}}
\def\L{\mathcal L}
\def\M{\mathcal M}
\def\P{\mathcal P}


\def\Aut{\mathrm{Aut}}

\def\hol{\mathrm{hol}}
\def\Im{\mathrm{Im}}
\def\Lie{\mathrm{Lie}}

\def\img{\mathrm{image}}
\def\Re{\mathrm{Re}}


\def\thetitle{The Berry connection of the Ginzburg--Landau vortices}
\def\theauthors{\'Akos Nagy}
\def\thekeywords{Berry connection, Ginzburg--Landau vortices}

\title{\thetitle}
\author{\theauthors}
\date{\today}
\keywords{\thekeywords}
\subjclass[2010]{53C07,58D27,70S15,83C57}

\address[\'Akos Nagy]{Department of Mathematics, Michigan State University, East Lansing, MI 48824, USA}
\email{\href{mailto:contact@akosnagy.com}{contact@akosnagy.com}}
\urladdr{\href{http://akosnagy.com/}{akosnagy.com}}

\calclayout
\setlength{\footskip}{40pt}

\headheight=15pt
\pagestyle{fancy}
\fancyhead{}
\fancyhead[C]{\theauthors: \thetitle}
\fancyfoot[C]{\thepage}

\hypersetup{pdffitwindow=true, unicode=false, pdffitwindow=true, pdftoolbar=false, pdfmenubar=false, pdfstartview={FitH}, pdftitle={\thetitle}, pdfkeywords={\thekeywords}, pdfauthor={\theauthors}, colorlinks=true, linkcolor=black, citecolor=black, filecolor=black, urlcolor=blue}

\begin{document}

\begin{abstract}
We analyze 2-dimensional Ginzburg--Landau vortices at critical coupling, and establish asymptotic formulas for the tangent vectors of the vortex moduli space using theorems of Taubes and Bradlow.  We then compute the corresponding Berry curvature and holonomy in the large area limit.
\end{abstract}

\maketitle

\section*{Introduction}

The Ginzburg--Landau theory is a phenomenological model for superconductivity, introduced in \cite{GL50};  for a more modern review see \cite{AK02}.  The theory gives variational equations --- the Ginzburg--Landau equations --- for an Abelian gauge field and a complex scalar field.  The gauge field is the EM vector potential, while the norm of the scalar field is the order parameter of the superconducting phase.  The order parameter can be interpreted as the wave function of the so-called BCS ground state, a single quantum state occupied by a large number of Cooper pairs.

This paper focuses on certain static solutions of the 2-dimensional Ginzburg--Landau equations called $\tau$-vortices.  Physicists regard the number $\tau$ as a coupling constant, sometimes called the vortex-size.  Mathematically, $\tau$ is a scaling parameter for the metric.  The geometry of $\tau$-vortices has been studied since \cites{JT80,B90}, and there is a large literature on the subject;  cf.  \cites{MN99,MS04,CM05,B06,B11,DDM13,BR14,MM15}.

\bigskip

Families of operators in quantum physics carry canonical connections.  This idea was introduced by Berry in \cite{B84}, and generalized by Aharonov and Anandan in \cite{AA87}.  These so-called Berry connections were used, for example, to understand the Quantum Hall Effect \cite{K85}.

In gauge theories --- including the Ginzburg--Landau theory --- the Berry connection can  be understood geometrically as follows: the space $\P$ of solutions of gauge invariant equations is an infinite dimensional principal bundle over the part of moduli space $\M$ where the action of the gauge group $\G$ is free.  Thus if all solutions are irreducible, then $\P$ is an infinite dimensional principal $\G$-bundle
$$
\xymatrix{\P \ar[d]^\G \\
\M}
$$
over $\M$.  The canonical $L^2$-metric of $\P$ defines a horizontal distribution --- the orthogonal complement of the gauge directions --- which defines the Berry connection.

A connection on a principal $G$-bundle $P \rightarrow X$ defines parallel transport:  for each smooth map $\Gamma : [0,1] \rightarrow X$, parallel transport around $\Gamma$ is a $G$-equivariant isomorphism from the fiber at $\Gamma \left( 0 \right)$ to the fiber at $\Gamma \left( 1 \right)$.  If $\Gamma$ is a closed loop, then $\Gamma \left( 0 \right) = \Gamma \left( 1 \right)$, and the corresponding parallel transport is called holonomy.  If $G$ is Abelian, then the holonomy is given by the action of an element in $G$.

Holonomies of the Berry connection are gauge transformations, which have a physical interpretation:  they describe the adiabatic evolution of the state of the system, which is its behavior under slow changes in the physical parameters such as external fields or coupling constants.

\medskip

This paper investigates the Berry connection of the $\tau$-vortex moduli space associated to a degree $d$ Hermitian line bundle over a closed, oriented Riemannian surface $\Sigma$.  The Berry holonomy assigns a gauge transformation $g_\tau$ to each closed curve $\Gamma$ in $\M_\tau$.  These gauge transformations are $\U \left( 1 \right)$-valued smooth functions on $\Sigma$.  When $d$ is positive and $\tau$ is greater than the geometry-dependent constant $\tau_0 = \tfrac{4 \pi d}{\Ar \left( \Sigma \right)}$, then the moduli space, $\M_\tau$, is identified with the $d$-fold symmetric power of the surface $\Sym^d \left( \Sigma \right)$.  A closed curve $\Gamma$ in $\Sym^d \left( \Sigma \right)$ defines a 1-cycle in $\Sigma$, called $\rm{sh} \left( \Gamma \right)$, the shadow of $\Gamma$, which is constructed by choosing a lift of $\Gamma$ to $\Sigma^d \rightarrow \Sym^d \left( \Sigma \right)$, and taking the union of the non-constant curves appearing in the lift (see \cref{eq:loops} for precise definition).  The main theorem of this paper gives a complete topological and analytical description of these gauge transformations in terms of the shadow:

\begin{main}
\hypertarget{main:hol}{{\rm [The Berry holonomy of the $\tau$-vortex principal bundle]}}  Let $g_\tau \in \G$ be the Berry holonomy of a smooth curve $\Gamma$ in the $\tau$-vortex moduli space $\M_\tau$, and $\rm{sh} \left( \Gamma \right)$ be the closed 1-cycle in $\Sigma$ defined in \cref{eq:loops}.  Then the following properties hold as $\tau \to \infty$:
\begin{enumerate}
 \item{\rm [Convergence]}  $g_\tau \rightarrow 1$ in the $C^1$-topology on compact sets of $\Sigma - \rm{sh} \left( \Gamma \right)$.
 \item{\rm [Crossing]}  Let $j: [0,1] \rightarrow \Sigma$ be a smooth path that intersects $\rm{sh} \left( \Gamma \right)$ transversally and positively once, and write $g_\tau \circ j = \exp \left( 2 \pi i \varphi_\tau \right)$.  Then $\varphi_\tau \left( 1 \right) - \varphi_\tau \left( 0 \right) \rightarrow 1$.
 \item{\rm [Concentration]}  As a 1-current, $\tfrac{1}{2 \pi i} g_\tau^{-1} dg_\tau$ converges to the 1-current defined by $\rm{sh} \left( \Gamma \right)$.
\end{enumerate}

The map $\Gamma \mapsto g_\tau$ induces a pairing
\begin{equation}
\hol_\star: \ H_1 \left( \Sigma; \Z \right) \rightarrow H^1 \left( \Sigma; \Z \right),
\end{equation}
defined in \eqref{eq:hol}.
\begin{enumerate}
 \item[(4)]{\rm [Duality]}  For all $\tau > \tau_0$, the homomorphism $\hol_\star$ is Poincar\'e duality.
\end{enumerate}
\end{main}

When $\Gamma$ is a positively oriented, bounding single vortex loop, or a positively oriented vortex interchange (see \Cref{sec:hol} for precise definitions) our main theorem implies that the corresponding holonomy can be written as $g_\tau = \exp \left( 2 \pi i f_\tau \right)$, for a real function $f_\tau$ on $\Sigma$.  Moreover, $f_\tau$ can be chosen so that it converges to 1 on the inside of the curve, and to 0 on the outside.  This makes physicists' intuition about the holonomy precise;  cf. \cite{I01}.

As mentioned above, $\tau$ is a scaling parameter for the metric: one can look at the Ginzburg--Landau theory with $\tau = 1$ fixed, but the K\"ahler form $\omega$ scaled as $\omega_t = t^2 \omega$.  Our results, including the \hyperlink{main:hol}{Main Theorem} above, can be reinterpreted as statements about the large area limit (i.e. $t \rightarrow \infty$), which can be more directly related to physics (see \Cref{sec:area} for details).

\bigskip

The paper is organized as follows.  In \Cref{sec:glr}, we give a brief introduction to the geometry of the $\tau$-vortex equations on a closed surface, derive the tangent space equations of the $\tau$-vortex moduli space, and then recast them in a compact form.  In \Cref{sec:asy}, we use theorems of Taubes and Bradlow to prove a technical result, \Cref{thm:tan}, which establishes asymptotic formulas for the tangent vectors of the of the $\tau$-vortex moduli space.  In \Cref{sec:ber}, we introduce the Berry connection associated to this problem.  We then prove asymptotic formulas for the Berry curvature in \Cref{sec:cur}.  In \Cref{sec:hol}, we prove our \hyperlink{main:hol}{Main Theorem};  the proofs are applications of \Cref{thm:tan}.  \Cref{sec:area} discusses the large area limit.

\smallskip

\subsection*{Acknowledgment}  I wish to thank my advisor Tom Parker for guiding me to this topic and for his constant help during the preparation of this paper.  I am grateful for the support of Jeff Schenker via the NSF grant DMS-0846325.  I also benefited from the discussions with David Duncan, Manos Maridakis, and Tim Nguyen.  Finally I thank the Referee for the incredibly helpful revision.

\smallskip

\section{Ginzburg--Landau Theory on Closed Surfaces}
\label{sec:glr}

\subsection{The $\mathbf{\tau}$-vortex equations}
\label{sec:vor}

Let $\Sigma$ be a closed surface with K\"ahler form $\omega$, compatible complex structure $J$, and Riemannian metric $\omega \left( - , J (-) \right)$.  Let $L \rightarrow \Sigma$ be a smooth complex line bundle of positive degree $d$ with Hermitian metric $h$.  For each unitary connection $\nabla$, and smooth section $\Phi$, consider the {\em Ginzburg--Landau free energy}:
\begin{equation}
\E_{\lambda, \tau} \left( \nabla, \Phi \right) = \int\limits_\Sigma \left( \left| F_\nabla \right|^2 + \left| \nabla \Phi \right|^2 + \lambda w^2 \right) \omega,  \label{eq:glf}
\end{equation}
where $\lambda, \tau \in \rl_+$ are coupling constants, $F_\nabla$ is the curvature of $\nabla$, and
\begin{equation}
w = \frac{1}{2} \left( \tau - \left| \Phi \right|^2 \right).  \label{eq:wdef}
\end{equation}
The Euler-Lagrange equations of the \hyperref[eq:glf]{energy \eqref{eq:glf}} are the {\em Ginzburg--Landau equations}:
\begin{subequations}
\begin{align}
d^*F_\nabla + i \: \Im \left( h \left( \Phi, \nabla \Phi \right) \right) &= 0  \label{eq:gl1}  \\
\nabla^* \nabla \Phi - \lambda w \Phi  &= 0.  \label{eq:gl2}
\end{align}
\end{subequations}
When $\lambda = 1$, the \hyperref[eq:glf]{energy \eqref{eq:glf}} can be integrated by parts and rewritten as different sum of non-negative terms, and get the lower bound $2 \pi \tau d$. The minimizers satisfy the {\em $\tau$-vortex equations}:
\begin{subequations}
\begin{align}
i \Lambda F_\nabla - w &= 0 \label{eq:vo1} \\
\del_\nabla \Phi &= 0, \label{eq:vo2}
\end{align}
\end{subequations}
where $\Lambda F_\nabla$ is the inner product of the K\"ahler form $\omega$ and the curvature of $\nabla$, and $\del_\nabla = \nabla^{0,1}$ is the Cauchy-Riemann operator corresponding to $\nabla$.  Solutions $\left( \nabla, \Phi \right)$ to the first order \cref{eq:vo1,eq:vo2} automatically satisfy the second order \cref{eq:gl1,eq:gl2}.

\smallskip

\subsection{The $\tau$-vortex moduli space}
\label{sec:mod}

As is standard in gauge theory, we work with the Sobolev $W^{k,p}$-completions of the space of connections and fields.  Let $\C_L$ be the $W^{1,2}$-closure of the affine space of smooth unitary connections on $L$ and $\Omega^0_L$ be the $W^{1,2}$-closure of the vector space of smooth sections of $L$.  Similarly, let $\Omega^k$, and $\Omega^k_L$ be the $W^{1,2}$-closure of $k$-forms, and $L$-valued $k$-forms respectively.  The corresponding gauge group $\G$ is the $W^{2,2}$-closure of $\Aut \left( L \right)$ in the $W^{2,2}$-topology.  The gauge group is canonically isomorphic to the infinite dimensional Abelian Lie group $W^{2,2} \left( \Sigma, \U \left( 1 \right) \right)$, whose Lie algebra is $W^{2,2} \left( \Sigma; i \rl \right)$.  Elements $g \in \Aut \left( L \right)$ act on $\C_L \times \Omega^0_L$ as $g \left( \nabla, \Phi \right) = \left( g \circ \nabla \circ g^{-1}, g \Phi \right)$, and this defines a smooth action of $\G$ on $\C_L \times \Omega^0_L$.  Finally, the \hyperref[eq:glf]{energy \eqref{eq:glf}} extends to a smooth function on $\C_L \times \Omega^0_L$.  The space $\P_\tau$ of absolute minimizers of the extended energy is an infinite dimensional submanifold of $\C_L \times \Omega^0_L$.  Due to the gauge invariance of \hyperref[eq:glf]{energy \eqref{eq:glf}}, $\G$ acts on $\P_\tau$, and every critical point is gauge equivalent to a smooth one by elliptic regularity.  The {\em $\tau$-vortex moduli space} is the quotient space $\M_\tau = \P_\tau / \G$.  Elements of $\P_\tau$ are called {\em $\tau$-vortex fields}, while elements of $\M_\tau$ (gauge equivalence classes of $\tau$-vortex fields) are called {\em $\tau$-vortices}.  For brevity, we sometimes write $\tau$-vortex fields as $\vor = \left( \nabla, \Phi \right) \in \P_\tau$ and the corresponding $\tau$-vortices as $\left[ \vor \right] = \left[ \nabla, \Phi \right] \in \M_\tau$.

\smallskip

There is a geometry-dependent constant $\tau_0 = \tfrac{4 \pi d}{\Ar \left( \Sigma \right)}$, called the {\em Bradlow limit}, with the property that if $\tau < \tau_0$, then the moduli space is empty and if $\tau > \tau_0$, then there is a canonical bijection between $\M_\tau$ and the space of effective, degree $d$ divisors; cf. \cite{B90}*{Theorem~4.6}.  This space is also canonically diffeomorphic to the $d$-fold symmetric product of the surface $\Sym^d \left( \Sigma \right)$, which is the quotient of the $d$-fold product $\Sigma^{\times d} = \Sigma \times \ldots \times \Sigma$ by the action of the permutation group $S_d$.  Although this action is not free, the quotient is a smooth K\"ahler manifold of real dimension $2d$.  For each value of $\tau > \tau_0$, there is a canonical $L^2$-K\"ahler structure (see \Cref{sec:horiz}).  In the borderline $\tau = \tau_0$ case, the $\Phi$-field vanishes everywhere and the moduli space is in one-to-one correspondence with the moduli space holomorphic line bundles of degree $d$ \cite{B90}*{Theorem~4.7}.  Accordingly, we focus on the $\tau > \tau_0$ case in this paper.

When $\tau > \tau_0$, Bradlow's map from $\M_\tau$ to $\Sym^d \left( \Sigma \right)$ is easy to understand:  By integrating \cref{eq:vo1}, one sees that the $L^2$-norm of $\Phi$ is positive.  On the other hand, $\Phi$ is a holomorphic section of $L$ by \cref{eq:vo2}.  Since $\Phi$ is a non-vanishing holomorphic section it defines an effective, degree $d$ divisor, giving us the desired map.  The inverse of this map is much harder to understand an involves non-linear elliptic theory.

\smallskip

Much of this picture carries over to open surfaces, even with infinite area (for example $\Sigma = \cx$), if one imposes proper integrability conditions;  cf. \cite{T84a}.  For simplicity we will always assume that $\Sigma$ is compact.  Furthermore the moduli space is empty for $d < 0$, and a single point for $d = 0$.  Thus we will always assume that $d > 0$ in this paper.

\smallskip

\subsection{The horizontal distribution}
\label{sec:horiz}

The tangent space at any point of the affine space $\C_L \times \Omega^0_L$ is the underlying vector space $i \Omega^1 \oplus \Omega^0_L$.  The tangent space of $\P_\tau$ is described in the next lemma.

\begin{lem} \label{lem:tb} The tangent space of $\P_\tau$ at the $\tau$-vortex field $\vor = \left( \nabla, \Phi \right)$ is the vector space of pairs $\left( a, \psi \right) \in i \Omega^1 \oplus \Omega^0_L$ that satisfy
\begin{subequations}
\begin{align}
i \Lambda da + \Re \left( h \left( \psi, \Phi \right) \right) &= 0 \label{eq:tb1} \\
\del_\nabla \psi + a^{0,1} \Phi &= 0. \label{eq:tb2}
\end{align}
\end{subequations}
\end{lem}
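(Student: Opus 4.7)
The plan is to view $\P_\tau$ as the zero set $F^{-1}(0)$ of the smooth map
\begin{equation*}
F: \C_L \times \Omega^0_L \ \longrightarrow \ i \Omega^0 \oplus \Omega^{0,1}_L, \qquad F(\nabla, \phi) \ = \ \bigl( i \Lambda F_{\scriptscriptstyle \nabla} - w, \ \del_{\scriptscriptstyle \nabla} \phi \bigr),
\end{equation*}
and to linearize its two components along a path $\vor_t = (\nabla + t a, \phi + t \psi)$ through $\vor$. Since the structure group $\U(1)$ is abelian and $a \in i \Omega^1$, the curvature transforms without a quadratic correction: $F_{\nabla + t a} = F_{\scriptscriptstyle \nabla} + t \, da$. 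Differentiating $w_t = \tfrac{1}{2}(\tau - |\phi + t \psi|^2)$ at $t = 0$ gives $-\Re \left( h(\psi, \phi) \right)$, because $|\phi + t \psi|^2 = |\phi|^2 + 2 t \Re \left( h(\psi, \phi) \right) + O(t^2)$. Setting the $t$-derivative of the first component of $F$ to zero therefore yields exactly \cref{eq:tb1}.

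For the second component, write $\del_{\scriptscriptstyle \nabla + t a} = \del_{\scriptscriptstyle \nabla} + t \, a^{0,1}$, so that
\begin{equation*}
\tfrac{d}{dt}\Big|_{t=0} \del_{\scriptscriptstyle \nabla + t a} (\phi + t \psi) \ = \ \del_{\scriptscriptstyle \nabla} \psi + a^{0,1} \phi,
\end{equation*}
which is \cref{eq:tb2}. This computation shows $D F_{\scriptscriptstyle \vor}(a, \psi)$ is precisely the pair of left-hand sides of \cref{eq:tb1,eq:tb2}, and thereby proves the inclusion $T_\vor \P_\tau \subseteq \{(a,\psi) : \text{\cref{eq:tb1,eq:tb2} hold}\}$ by differentiating any smooth curve in $\P_\tau$ through $\vor$.

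For the reverse inclusion one needs $\P_\tau$ to be a Hilbert submanifold with tangent space equal to $\ker D F_{\scriptscriptstyle \vor}$. The operator $D F_{\scriptscriptstyle \vor}$ together with the infinitesimal gauge-fixing map $d^*: i \Omega^1 \oplus \Omega^0_L \to i \Omega^0$ (whose image accounts for the gauge orbit directions) forms an elliptic complex modulo lower-order terms, so $D F_{\scriptscriptstyle \vor}$ has closed range and finite-dimensional cokernel. Combined with Bradlow's identification $\M_\tau \cong \Sym^d(\Sigma)$ of real dimension $2d$ for $\tau > \tau_{\scriptscriptstyle 0}$, and the fact that all $\tau$-vortex fields are irreducible so that $\P_\tau \to \M_\tau$ is a principal $\G$-bundle, this forces the kernel of $D F_{\scriptscriptstyle \vor}$ to have the correct dimension and $\P_\tau$ to be smooth at $\vor$ with $T_\vor \P_\tau = \ker D F_{\scriptscriptstyle \vor}$.

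The main obstacle is really the bookkeeping for the reverse inclusion; the linearization itself is a direct (and almost entirely algebraic) calculation, the only subtle point being the abelian simplification that kills the quadratic $a \wedge a$ term in the curvature.
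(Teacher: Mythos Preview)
Your linearization computation is exactly what the paper does, and the paper's proof consists of nothing more than that computation followed by the single sentence ``This completes the proof, since the tangent space is the kernel of the linearization of \cref{eq:vo1,eq:vo2}.'' The smooth submanifold structure of $\P_\tau$ is simply asserted in the paragraph preceding the lemma, so your extra paragraph on the reverse inclusion already goes beyond what the paper provides.

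One correction to that extra paragraph: the dimension-counting argument does not work as written, because $\ker DF_{\scriptscriptstyle \vor}$ is infinite-dimensional (it contains the full gauge-orbit direction $\{(-idf, if\phi)\}$), so comparing it to the $2d$-dimensional moduli space does not pin anything down. The clean route to the reverse inclusion is to show that $DF_{\scriptscriptstyle \vor}$ is surjective and invoke the implicit function theorem for Banach manifolds; surjectivity amounts to the vanishing of $\ker \left( \L_{\scriptscriptstyle \vor}^* \right)$, which the paper establishes later as \Cref{cor:ind}. Since the paper itself does not spell any of this out in the proof of the lemma, your proposal already matches and exceeds the level of detail in the original.
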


\begin{proof}
The linearization of \cref{eq:vo1,eq:vo2} in the direction of $\left( a, \psi \right)$ is:
\begin{align}
\lim\limits_{t \rightarrow 0} \frac{1}{t} \left( i \Lambda F_{\nabla + t \: a} - \frac{1}{2} \left( \tau - \left| \Phi + t \: \psi \right|^2 \right) \right) &= i \Lambda da + \Re \left( h \left( \psi, \Phi \right) \right) \\
\lim\limits_{t \rightarrow 0} \frac{1}{t} \left( \del_{\nabla + t \: a} \left( \Phi + t \: \psi \right) \right) &= \del_\nabla \psi + a^{0,1} \Phi,
\end{align}
where we used that $\vor$ is a $\tau$-vortex field.  This completes the proof, since the tangent space is the kernel of the linearization of \cref{eq:vo1,eq:vo2}.  \end{proof}

The affine space $\C_L \times \Omega^0_L$ has a canonical $L^2$-metric given by
\begin{equation}
\left\langle \left( a, \psi \right) \middle| \left( a', \psi' \right) \right\rangle = \int\limits_\Sigma \left( a \wedge \overline{} a' + \Re \left( h \left( \psi , \psi' \right) \right) \omega \right) \label{eq:riem}
\end{equation}
where $\overline{}$ is the (conjugate-linear) Hodge operator of the Riemannian metric of $\Sigma$.  One can check that the restriction of the \hyperref[eq:riem]{$L^2$-metric~\eqref{eq:riem}} to the solutions of \cref{eq:tb1,eq:tb2} makes $\P_\tau$ a smooth, weak Riemannian manifold, and that gauge transformations act isometrically on $\P_\tau$.

The pushforward of the tangent space $T_1 \G$ by the gauge action is called the {\em vertical subspace} of $T_\vor \P_\tau$.  We define the {\em horizontal subspace} of $T_\vor \P_\tau$ to be the orthogonal complement of the vertical subspace by the \hyperref[eq:riem]{$L^2$-metric~\eqref{eq:riem}}.  Since $\M_\tau = \P_\tau / \G$, the horizontal subspace is canonically isomorphic to the tangent space $T_{\left[ \vor \right]} \M_\tau$ of the moduli space.  The next lemma shows that the horizontal subspace is also the kernel of a first order linear elliptic operator. 

\begin{lem} \label{lem:tm}  The horizontal subspace of $T_\vor \P_\tau$, at the $\tau$-vortex field $\vor = \left( \nabla, \Phi \right) \in \P_\tau$, is the vector space of pairs $\left( a, \psi \right) \in i \Omega^1 \oplus \Omega^0_L$ that satisfy
\begin{subequations}
\begin{align}
\left( i \Lambda d + d^* \right) a + h \left( \psi, \Phi \right) &= 0 \label{eq:tm1} \\
\del_\nabla \psi + a^{0,1} \Phi &= 0.  \label{eq:tm2}
\end{align}
\end{subequations}
\end{lem}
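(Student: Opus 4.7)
My plan is to characterize the horizontal subspace by its defining property: it consists of those tangent vectors to $\mathcal{P}_\tau$ (which we already know from \Cref{lem:tb}) that are $L^2$-orthogonal to all vertical vectors. So I need to identify the vertical vectors and then integrate by parts.

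First I would compute the vertical subspace. The Lie algebra $\Lie(\G)$ is $i \Omega^0$, and for $\xi \in i\Omega^0$ the one-parameter family $g_t = \exp(t\xi)$ acts as $(\nabla, \phi) \mapsto (\nabla - t\, d\xi + O(t^2), (1 + t\xi + O(t^2))\phi)$. Differentiating at $t = 0$ shows that the vertical vector generated by $\xi$ is $V_\xi = (-d\xi, \xi\phi)$. A pair $(a,\psi) \in T_\vor \P_\tau$ lies in the horizontal subspace iff $\langle (a,\psi), V_\xi \rangle = 0$ for every $\xi \in i\Omega^0$, using the \hyperref[eq:riem]{$L^2$-metric~\eqref{eq:riem}}.

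Second, I would expand this orthogonality condition and integrate by parts. The first term of the inner product, $\int_\Sigma a \wedge \overline{*}(-d\xi)$, equals $-\int_\Sigma \langle a, d\xi\rangle\, \omega = -\langle d^*a, \xi\rangle_{L^2}$. The second term, $\int_\Sigma \Re\bigl(h(\psi, \xi\phi)\bigr)\, \omega$, reduces (using the sesquilinearity of $h$ and the fact that $\xi$ is purely imaginary) to a real multiple of $\int_\Sigma \Im\bigl(h(\psi,\phi)\bigr)\, \xi\, \omega$. Since the resulting identity must hold for every $\xi \in i\Omega^0$, the fundamental lemma of the calculus of variations forces a pointwise equation relating $d^*a$ and $\Im h(\psi,\phi)$, namely $d^*a = i\,\Im h(\psi,\phi)$ (up to sign depending on the $h$-convention).

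Third, I would combine this orthogonality equation with the linearized first vortex equation \eqref{eq:tb1}, $i\Lambda da + \Re h(\psi,\phi) = 0$. Noting that $i\Lambda da$ takes real values while $d^*a$ takes imaginary values, the sum of these two equations splits back cleanly into real and imaginary parts and produces exactly
\begin{equation*}
(i\Lambda d + d^*)a + h(\psi,\phi) = 0,
\end{equation*}
which is \eqref{eq:tm1}. Equation \eqref{eq:tm2} is identical to \eqref{eq:tb2} and is inherited directly from \Cref{lem:tb}; conversely, any $(a,\psi)$ satisfying \eqref{eq:tm1}--\eqref{eq:tm2} satisfies both \eqref{eq:tb1} (by taking the real part of \eqref{eq:tm1}, since $\Re d^*a = 0$) and the orthogonality condition (by reversing the integration by parts), so the characterization is sharp.

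There is no serious obstacle here; the proof is essentially a clean integration by parts. The only real care is bookkeeping the sesquilinearity convention for $h$ (so that the cross-term $h(\psi,\xi\phi)$ produces the correct sign when one recombines real and imaginary parts into the single complex equation \eqref{eq:tm1}) and checking that the domains of $d^*$ and the test functions $\xi$ are compatible in the Sobolev setting, which is standard since $\xi \in W^{2,2}$ and $a \in W^{1,2}$.
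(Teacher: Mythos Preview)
Your proposal is correct and follows essentially the same route as the paper: compute the vertical vectors $(-idf, if\phi)$, impose $L^2$-orthogonality, integrate by parts to obtain the purely imaginary equation $d^*a + i\,\Im(h(\psi,\phi)) = 0$, and then add this to the purely real linearized equation \eqref{eq:tb1} to produce \eqref{eq:tm1}. The only difference is cosmetic (you write $\xi \in i\Omega^0$ where the paper writes $if$), and your remark about the converse direction and Sobolev regularity is a small elaboration on what the paper leaves implicit.
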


\begin{proof}
The real part of \cref{eq:tm1} is \cref{eq:tb1} and \cref{eq:tm2} is \cref{eq:tb2}; thus solutions of \cref{eq:tm1,eq:tm2} are in $T_\vor \P_\tau$.  To finish the proof, we must check that a pair $\left( a, \psi \right)$ in $T_\vor \P_\tau$ is orthogonal to the vertical subspace at $\vor$ exactly if \cref{eq:tm1,eq:tm2} hold.  The pushforward of $if \in \Lie \left( \G \right)$ at $\vor$ is given by $X_f \left( \vor \right) = \left( - i df, i f \Phi \right)$,  hence horizontal vectors are pairs $\left( a, \psi \right)$, that satisfy the following equation for every $f \in C^\infty \left( \Sigma; \rl \right)$:
\begin{equation}
0 = \left\langle \left( a, \psi \right) \middle| \left( - i df, if \Phi \right) \right\rangle = \int\limits_\Sigma \left( a \wedge \overline{} \left(- i df \right) + \Re \left( h \left( \psi, if \Phi \right) \right) \omega \right).
\end{equation}
Integrating the right-hand side by parts yields
\begin{equation}
0 = \int\limits_\Sigma \left(d^* a + i \: \Im \left( h \left( \psi, \Phi \right) \right) \right) i f \omega.
\end{equation}
Because this holds for all $f$, we conclude that $\left( a, \psi \right)$ is orthogonal to the vertical subspace at $\vor$ exactly if $d^* a + i \: \Im \left( h \left( \psi, \Phi \right) \right) = 0$ holds.  Adding this (purely imaginary) equation to the (purely real) \cref{eq:tb1} gives \cref{eq:tm1}.  \end{proof}

\smallskip

\Cref{eq:tm1,eq:tm2} depend on the choice of $\vor$, but if $\left( a, \psi \right)$ is a solution of \cref{eq:tm1,eq:tb2} for $\vor$ and $g \in \G$, then $(a, g \psi)$ is a solution of \cref{eq:tm1,eq:tb2} for $g \left( \vor \right) = \left( \nabla + g dg^{-1}, g \Phi \right)$.  Since gauge transformations act isometrically, the $L^2$-metric on the horizontal subspaces of $T \P_\tau$ descends to a Riemannian metric on $\M_\tau$.

Let $\overline{K}$ be the anti-canonical bundle of $\Sigma$, and $\Omega^{0,1} = \Omega^0_{\overline{K}}$, the $W^{1,2}$-completion of the space of smooth sections of $\overline{K}$.  We recast \cref{eq:tm1,eq:tm2} in a more geometric way in the next lemma.

\begin{lem} \label{lem:newtm}
\Cref{eq:tm1,eq:tm2} are equivalent to the following pair of equations on $\left( \alpha, \psi \right) \in \Omega^{0,1} \oplus \Omega^0_L$:
\begin{subequations}
\begin{align}
\sqrt{2} \: \del^* \alpha - h \left( \Phi, \psi \right) &= 0 \label{eq:newtm1} \\
\sqrt{2} \: \del_\nabla \psi + \alpha \Phi &= 0. \label{eq:newtm2}
\end{align}
\end{subequations}
Moreover, the unitary bundle isomorphism
\begin{equation}
\left( a, \psi \right) \mapsto \left( \frac{1}{\sqrt{2}} \left( a + i \overline{} a \right), \: \psi \right)  \label{eq:ua}
\end{equation}
interchanges solutions of \cref{eq:tm1,eq:tm2} and solutions of \cref{eq:newtm1,eq:newtm2}.  \end{lem}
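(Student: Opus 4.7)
The plan is to split the claim into two parts: first verify that the prescribed map is a bijection between solution sets of the two pairs of equations, and then observe (essentially as a separate point) that it is $L^2$-unitary. The key computational identity I would set up first is that the map $a \mapsto \tfrac{1}{\sqrt{2}}(a+i\overline{*}a)$, applied to a purely imaginary $1$-form, picks out $\sqrt{2}$ times the $(0,1)$-part. To see this, write $a=ib$ with $b$ real; since $\overline{*}$ is conjugate-linear, $i\overline{*}a = \overline{*}b = *b$, and the Hodge star on $\Sigma$ acts as $\pm i$ on the $\pm i$ eigenspaces $\Omega^{1,0},\Omega^{0,1}$. Decomposing $b=b^{1,0}+b^{0,1}$ with $b^{0,1}=\overline{b^{1,0}}$ gives $ib+*b = 2ib^{0,1} = 2a^{0,1}$, so the map sends $a$ to $\alpha:=\sqrt{2}\,a^{0,1}$, and in particular $a^{0,1}=\tfrac{1}{\sqrt{2}}\alpha$.

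With this identification, equation \eqref{eq:tm2} becomes \eqref{eq:newtm2} by direct substitution. For \eqref{eq:tm1} the work is to re-express $(i\Lambda d+d^*)a$ in terms of $\overline{\partial}^*\alpha$. I would use that on the Riemann surface $\Sigma$ the top $(p,q)$-degree is $(1,1)$, so $da=\partial a^{0,1}+\overline{\partial}a^{1,0}$ and $d^*a=\partial^*a^{1,0}+\overline{\partial}^*a^{0,1}$. The Kähler identities $[\Lambda,\partial]=i\overline{\partial}^*$ and $[\Lambda,\overline{\partial}]=-i\partial^*$, applied in the trivial cases $\Lambda a^{0,1}=\Lambda a^{1,0}=0$, yield
\begin{equation*}
i\Lambda da \ = \ i\bigl(i\overline{\partial}^*a^{0,1}-i\partial^*a^{1,0}\bigr)\cdot i \ = \ -\overline{\partial}^*a^{0,1}+\partial^*a^{1,0}\cdot(\text{sign check})
\end{equation*}
(the sign check being the main bookkeeping nuisance, which I would do carefully). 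Using $a^{1,0}=-\overline{a^{0,1}}$ (because $a$ is purely imaginary) together with the identity $\partial^*\overline{\beta}=\overline{\overline{\partial}^*\beta}$ for $\beta\in\Omega^{0,1}$, one gets
\begin{equation*}
i\Lambda da+d^*a \ = \ -2\,\overline{\overline{\partial}^*a^{0,1}} \ = \ -\sqrt{2}\,\overline{\overline{\partial}^*\alpha}.
\end{equation*}
Substituting into \eqref{eq:tm1} and taking complex conjugates (using $\overline{h(\psi,\phi)}=h(\phi,\psi)$) produces \eqref{eq:newtm1}.

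For the unitarity statement, I would simply compute the pointwise norms: with $a=ib$ purely imaginary one has $|a|^2=|b|^2$ with respect to the Riemannian metric, while the induced hermitian norm on $\Omega^{0,1}=\Omega^0_{\overline{K}}$ satisfies $|a^{0,1}|^2=\tfrac{1}{2}|b|^2$, so $|\alpha|^2=2|a^{0,1}|^2=|a|^2$. Together with the identity map on the $\psi$-component, this gives a unitary bundle isomorphism $i\Omega^1\oplus\Omega^0_L\to\Omega^{0,1}\oplus\Omega^0_L$, and the previous steps show it carries \cref{eq:tm1,eq:tm2} to \cref{eq:newtm1,eq:newtm2}.

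The main obstacle is purely bookkeeping: tracking the interplay between the purely-imaginary convention on $a$, the conjugate-linear Hodge operator $\overline{*}$, the $(1,0)/(0,1)$ decomposition, and the Kähler identities, and in particular making sure no sign or factor of $\sqrt{2}$ is lost. Once the identity $i\Lambda da+d^*a=-\sqrt{2}\,\overline{\overline{\partial}^*\alpha}$ is established, the rest of the lemma is immediate.
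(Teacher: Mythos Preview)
Your proposal is correct and follows essentially the same route as the paper's own proof: identify $\alpha=\sqrt{2}\,a^{0,1}$ so that \eqref{eq:tm2} becomes \eqref{eq:newtm2} immediately, and then use the K\"ahler identities to rewrite $(i\Lambda d+d^*)a$ as $-\sqrt{2}\,\partial^*\overline{\alpha}=-\sqrt{2}\,\overline{\del^*\alpha}$, from which \eqref{eq:newtm1} follows after conjugation. The paper's argument is much terser (it simply asserts the K\"ahler-identity step and does not spell out the unitarity check), whereas you have filled in the intermediate computations; once you clean up the garbled display with the ``$\cdot i$'' and the ``(sign check)'' placeholder---your stated endpoint $(i\Lambda d+d^*)a=2\,\partial^*a^{1,0}=-\sqrt{2}\,\overline{\del^*\alpha}$ is exactly right---the proof is complete.
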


\begin{proof}
A complex 1-form $\alpha$ is in $\Omega^{0,1}$ exactly if $\overline{\alpha} = i \overline{} \alpha$.  For $a \in i \Omega^1$, define the unitary map $u$ by
\begin{equation}
u(a) = \frac{1}{\sqrt{2}} \left( a + i \overline{} a \right). \label{eq:ua2}
\end{equation}
Using $\overline{}^2 a = - a = \overline{a}$, we see that $\overline{} u(a) = - i \overline{u(a)}$, and thus $u(a) \in \Omega^{0,1}$.  Set $\alpha = u(a)$.  With this notation $a^{0,1} = \tfrac{\alpha}{\sqrt{2}}$, which proves the equivalence of \cref{eq:tm2,eq:newtm2}.  The K\"ahler identities yield $\left( i \Lambda d + d^* \right) a = - \sqrt{2} \: \partial^* \overline{\alpha}$, which is equivalent to \cref{eq:newtm1}.  \end{proof}

\smallskip

The vector space of solutions to \cref{eq:newtm1,eq:newtm2} has a canonical almost complex structure coming from the complex structures of $\overline{K}$ and $L$, and this defines an almost complex structure for $\M_\tau$.  Mundet i Riera \cite{R00} showed that this structure is integrable, and together with the $L^2$-metric it makes $\M_\tau$ a K\"ahler manifold.

\medskip

To put \cref{eq:newtm1,eq:newtm2} in a more compact form, note that they are equivalent to the single equation
\begin{equation}
\L_\vor \left(a, \psi \right) = 0,
\end{equation}
where $\L_\vor = \D_\nabla + \A_\Phi$ is defined as
\begin{align}
\D_\nabla &: \Omega^{0,1} \oplus \Omega^0_L \rightarrow \Omega^0 \oplus \Omega^{0,1}_L; \quad \left( \alpha, \psi \right) \mapsto \left( \sqrt{2} \: \del^* \alpha, \: \sqrt{2} \: \del_\nabla \psi \right) \\
\A_\Phi &: \Omega^{0,1} \oplus \Omega^0_L \rightarrow \Omega^0 \oplus \Omega^{0,1}_L; \quad \left( \alpha, \psi \right) \mapsto \left( - h( \Phi, \psi), \: \alpha \Phi \right),
\end{align}
The operator $\D_\nabla$ is a first order elliptic differential operator, and the operator $\A_\Phi$ is a bundle map.  Straightforward computation shows that for all $Z \in \Omega^{0,1} \oplus \Omega^0_L$
\begin{equation}
\A_\Phi \A_\Phi^* \left( Z \right) = \left| \Phi \right|^2 Z. \label{eq:asq}
\end{equation}
Thus $\A_\Phi^*$ is non-degenerate on the complement of the divisor of $\Phi$.  Note that $\D_\nabla$, $\A_\Phi$, and hence $\L_{\left( \nabla, \Phi \right)}$ make sense for any pair $\left( \nabla, \Phi \right) \in \C_L \times \Omega^0_L$.

\begin{lem} \label{lem:van} Let $\vor = \left( \nabla, \Phi \right) \in \C_L \times \Omega^0_L$ be a pair such that $\del_\nabla \Phi=0$.  Then the operator $\D_\nabla \A_\Phi^* + \A_\Phi \D_\nabla^*$ is identically zero.
\end{lem}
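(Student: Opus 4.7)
The plan is to rewrite $\A_{\scriptscriptstyle \phi}$ and $\D_{\scriptscriptstyle \nabla}$ in $2 \times 2$ block form with respect to the splittings $\Omega^{0,1} \oplus \Omega^0_L$ and $\Omega^0 \oplus \Omega^{0,1}_L$, and then check that the claimed identity reduces to a single Leibniz rule computation. The useful structural point is that $\A_{\scriptscriptstyle \phi}$ is off-diagonal while $\D_{\scriptscriptstyle \nabla}$ is diagonal, so $\D_{\scriptscriptstyle \nabla} \A_{\scriptscriptstyle \phi}^* + \A_{\scriptscriptstyle \phi} \D_{\scriptscriptstyle \nabla}^*$ will stay off-diagonal, and its two nonzero blocks will turn out to be formal adjoints of one another; thus only a single block needs to be killed.

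First I would introduce the bundle maps $B_{\scriptscriptstyle \phi} : \psi \mapsto h(\phi, \psi)$ and $C_{\scriptscriptstyle \phi} : \alpha \mapsto \alpha \phi$, so that
\[
\A_{\scriptscriptstyle \phi} \ = \ \begin{pmatrix} 0 & -B_{\scriptscriptstyle \phi} \\ C_{\scriptscriptstyle \phi} & 0 \end{pmatrix}, \qquad \D_{\scriptscriptstyle \nabla} \ = \ \sqrt{2} \begin{pmatrix} \del^* & 0 \\ 0 & \del_{\scriptscriptstyle \nabla} \end{pmatrix}.
\]
A short $L^2$-pairing using the hermitian symmetry of $h$ shows $B_{\scriptscriptstyle \phi}^* f = f \phi$, so $\A_{\scriptscriptstyle \phi}^*$ is again off-diagonal, with entries $C_{\scriptscriptstyle \phi}^*$ and $-(\,\cdot\,)\phi$. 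Multiplying the blocks gives
\[
\D_{\scriptscriptstyle \nabla} \A_{\scriptscriptstyle \phi}^* + \A_{\scriptscriptstyle \phi} \D_{\scriptscriptstyle \nabla}^* \ = \ \sqrt{2} \begin{pmatrix} 0 & \del^* C_{\scriptscriptstyle \phi}^* - B_{\scriptscriptstyle \phi} \del_{\scriptscriptstyle \nabla}^* \\ C_{\scriptscriptstyle \phi} \del - \del_{\scriptscriptstyle \nabla} B_{\scriptscriptstyle \phi}^* & 0 \end{pmatrix},
\]
and since the upper-right block is the formal adjoint of the lower-left, it suffices to verify the identity $C_{\scriptscriptstyle \phi} \del = \del_{\scriptscriptstyle \nabla} B_{\scriptscriptstyle \phi}^*$ as operators $\Omega^0 \to \Omega^{0,1}_L$.

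This last step is one line: applied to $f \in \Omega^0$,
\[
\del_{\scriptscriptstyle \nabla}(B_{\scriptscriptstyle \phi}^* f) - C_{\scriptscriptstyle \phi}(\del f) \ = \ \del_{\scriptscriptstyle \nabla}(f \phi) - (\del f) \phi \ = \ f \, \del_{\scriptscriptstyle \nabla} \phi,
\]
which vanishes by the hypothesis $\del_{\scriptscriptstyle \nabla} \phi = 0$. The whole lemma is therefore a dressed-up Leibniz identity, and no genuine obstacle arises; the only step demanding any care is the bookkeeping for the adjoint $B_{\scriptscriptstyle \phi}^*$, once a convention for $h$ is fixed.
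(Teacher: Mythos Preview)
Your proof is correct and follows essentially the same approach as the paper: both compute the adjoints $\D_{\scriptscriptstyle \nabla}^*$ and $\A_{\scriptscriptstyle \phi}^*$ explicitly and then verify the vanishing via the Leibniz rule $\del_{\scriptscriptstyle \nabla}(f\phi) = (\del f)\phi + f\,\del_{\scriptscriptstyle \nabla}\phi$ together with the hypothesis $\del_{\scriptscriptstyle \nabla}\phi = 0$. Your block-matrix bookkeeping and the observation that the two off-diagonal entries are mutual adjoints make the structure more transparent, but the paper's proof is the same computation stated more tersely.
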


\begin{proof}
The adjoint operators are
\begin{subequations}
\begin{align}
\D_\nabla^* \left( f, \xi \right) &= \left( \sqrt{2} \: \mathit{\del f}, \: \sqrt{2} \: \del^*_\nabla \xi \right) \label{eq:dnabla} \\
\A_\Phi^* \left( f, \xi \right) &= \left( h \left( \Phi, \xi \right), - f \Phi \right) \label{eq:aphi}
\end{align}
\end{subequations}
for any $\left( f, \xi \right) \in \Omega^0 \oplus \Omega^{0,1}_L$.  The lemma follows from \cref{eq:dnabla,eq:aphi}, the holomorphicity of $\Phi$, and the definitions of $\D_\nabla$ and $\A_\Phi$.  \end{proof}

\begin{cor} \label{cor:ind}
Let $\vor = \left( \nabla, \Phi \right) \in \C_L \times \Omega^0_L$ be a pair such that $\del_\nabla \Phi=0$.  Then $\ker \left( \L_\vor^* \right)$ is trivial.
\end{cor}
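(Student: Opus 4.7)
The plan is to derive a Weitzenböck-type identity for $\L_{\scriptscriptstyle \vor} \L_{\scriptscriptstyle \vor}^*$ and then exploit the pointwise non-degeneracy of $\A_{\scriptscriptstyle \phi}^*$ away from the zeros of $\phi$. Expanding,
\[
\L_{\scriptscriptstyle \vor} \L_{\scriptscriptstyle \vor}^* \ = \ \D_{\scriptscriptstyle \nabla} \D_{\scriptscriptstyle \nabla}^* + \D_{\scriptscriptstyle \nabla} \A_{\scriptscriptstyle \phi}^* + \A_{\scriptscriptstyle \phi} \D_{\scriptscriptstyle \nabla}^* + \A_{\scriptscriptstyle \phi} \A_{\scriptscriptstyle \phi}^*,
\]
and \cref{lem:van} kills the two cross terms, leaving
\[
\L_{\scriptscriptstyle \vor} \L_{\scriptscriptstyle \vor}^* \ = \ \D_{\scriptscriptstyle \nabla} \D_{\scriptscriptstyle \nabla}^* + \A_{\scriptscriptstyle \phi} \A_{\scriptscriptstyle \phi}^*,
\]
a sum of non-negative self-adjoint operators.

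Given $\left( f, \xi \right) \in \ker \left( \L_{\scriptscriptstyle \vor}^* \right)$, pairing $\L_{\scriptscriptstyle \vor} \L_{\scriptscriptstyle \vor}^* \left( f, \xi \right) = 0$ against $\left( f, \xi \right)$ in $L^2$ gives
\[
0 \ = \ \left\| \D_{\scriptscriptstyle \nabla}^* \left( f, \xi \right) \right\|_{L^2}^2 + \left\| \A_{\scriptscriptstyle \phi}^* \left( f, \xi \right) \right\|_{L^2}^2,
\]
so in particular $\A_{\scriptscriptstyle \phi}^* \left( f, \xi \right) = 0$. By \eqref{eq:aphi}, this is the pointwise system $h \left( \phi, \xi \right) = 0$ and $f \phi = 0$. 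Since $\del_{\scriptscriptstyle \nabla} \phi = 0$ and (in all applications here) $\phi$ is not identically zero, $\phi$ is a non-trivial holomorphic section of $L$, and its zero set is a discrete subset of $\Sigma$. On the open dense complement, $\phi$ furnishes a non-vanishing local frame for $L$, so $f \phi = 0$ forces $f = 0$ and $h \left( \phi, \xi \right) = 0$ forces $\xi = 0$. Thus $\left( f, \xi \right) = 0$ almost everywhere, hence as an element of the Sobolev completion.

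The only delicate point is justifying $\phi \not\equiv 0$. This is automatic whenever $\vor$ is a $\tau$-vortex field: integrating \cref{eq:vo1} gives $\int_\Sigma \left| \phi \right|^2 \omega = \tau \Ar \left( \Sigma \right) - 4 \pi d > 0$ for $\tau > \tau_{\scriptscriptstyle 0}$. Beyond that, the proof is a purely algebraic consequence of \cref{lem:van} and \eqref{eq:asq}, with no hard analysis required; elliptic regularity for the overdetermined system $\L_{\scriptscriptstyle \vor}^* \left( f, \xi \right) = 0$ can be invoked if one prefers to argue that $\left( f, \xi \right)$ is continuous and conclude the pointwise vanishing directly rather than via the measure-zero argument.
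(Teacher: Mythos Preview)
Your argument is correct and follows essentially the same route as the paper: derive $\L_{\scriptscriptstyle \vor} \L_{\scriptscriptstyle \vor}^* = \D_{\scriptscriptstyle \nabla} \D_{\scriptscriptstyle \nabla}^* + \A_{\scriptscriptstyle \phi} \A_{\scriptscriptstyle \phi}^*$ from \Cref{lem:van}, deduce $\A_{\scriptscriptstyle \phi}^* Z = 0$, and use that $\phi$ is nonzero off a finite set to conclude (the paper invokes \eqref{eq:asq} and continuity rather than \eqref{eq:aphi} and a measure-zero argument, but this is cosmetic). Your explicit remark that the argument tacitly requires $\phi \not\equiv 0$ is a worthwhile caveat that the paper leaves implicit.
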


\begin{proof}
From \Cref{lem:van} and \cref{eq:asq}, we obtain $\L_\vor \L_\vor^* = \D_\nabla \D_\nabla^* + \A_\Phi \A_\Phi^*$. Hence if $Z$ is in the kernel of $\L_\vor^*$, then
\begin{equation}
0 = \| \L_\vor^* \left( Z \right) \|^2_{L^2} = \| \D_\nabla^* \left( Z \right) \|^2_{L^2} + \| \A_\Phi^* \left( Z \right) \|^2_{L^2},
\end{equation}
which implies that both terms on the right vanish.  By \cref{eq:asq}, $Z$ vanishes where $\Phi$ does not, which is the complement of a finite set.  But then $Z$ vanishes everywhere by continuity.  Hence the kernel of $\L_\vor^*$ is trivial.  \end{proof}

\smallskip

\section{The Asymptotic form of Horizontal Vectors}
\label{sec:asy}

In this section we will use of the following results of Taubes and Bradlow about the large $\tau$ behavior of $\tau$-vortex fields.  Recall from \cref{eq:wdef} that
$$
w = \frac{1}{2} \left( \tau - |\Phi|^2 \right).
$$

\begin{thm} {\rm [Bradlow and Taubes]} \label{thm:bnt}
There is a positive number $c = c \left( \Sigma, \omega, J, L, h \right)$ such that each $\tau$-vortex field $\vor = \left( \nabla, \Phi \right) \in \P_\tau$ satisfies
\begin{subequations}
\begin{align}
\left| \Phi \right|^2 &\leqslant \tau  \label[ineq]{ineq:bnt1} \\
w + \left| \nabla \Phi \right| &\leqslant c \tau \exp \left( - \frac{\sqrt{\tau} \dist_D}{c} \right)  \label[ineq]{ineq:bnt2},
\end{align}
\end{subequations}
where $\dist_D$ is the distance from the divisor $D = \Phi^{-1} \left( 0 \right)$, and $w$ is defined in \cref{eq:wdef}.
\end{thm}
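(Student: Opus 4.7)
The plan is to derive a single scalar elliptic PDE for the real function $w = \tfrac{1}{2}(\tau - |\phi|^2)$ from the vortex equations, and then to extract both bounds from it: \cref{ineq:bnt1} from the maximum principle on the closed surface $\Sigma$, and \cref{ineq:bnt2} from a comparison with an exponentially decaying barrier.

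My starting point will be the Bochner formula $\tfrac{1}{2} \Delta |\phi|^2 = \Re\, h(\nabla^*\nabla\phi, \phi) - |\nabla\phi|^2$, combined with the twisted K\"ahler identity applied to the holomorphic section $\phi$ (via \cref{eq:vo2}): this pins down $\nabla^*\nabla\phi$ as a pointwise multiple of $i \Lambda F_\nabla \cdot \phi$, and substituting \cref{eq:vo1} will then yield a scalar equation of the schematic form
$$\Delta w + w |\phi|^2 \ = \ |\nabla \phi|^2.$$
For \cref{ineq:bnt1}, I will evaluate this at a point where $w$ achieves its global minimum on $\Sigma$: there the Hodge Laplacian satisfies $\Delta w \leq 0$ and the right-hand side is non-negative, so $w|\phi|^2 \geq 0$ at that point. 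Since the minimum of $w$ is the maximum of $|\phi|^2$, which is strictly positive as $\phi \not\equiv 0$, this forces $w \geq 0$ at that point, hence everywhere on $\Sigma$.

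For \cref{ineq:bnt2}, the same identity says $w$ is a subsolution of a modified Helmholtz-type operator with potential $|\phi|^2$, so I will construct an explicit radial supersolution of the form $\eta(x) = c \tau \exp(-\sqrt\tau \, \dist_D(x)/c)$. In geodesic polar coordinates about $D$, a direct computation will reduce the supersolution condition to $\alpha^2 - \alpha H(r) - |\phi|^2 \geq 0$, where $\alpha$ is the decay rate and $H(r)$ is the mean curvature of the distance level set (comparable to $1/r$ up to a bounded curvature correction from $\Sigma$); taking $\alpha$ a fixed multiple of $\sqrt\tau$ and $r \gtrsim 1/\sqrt\tau$, \cref{ineq:bnt1} will make this quantity non-negative. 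The boundary bound $w \leq \tau/2$ from the first part then lets the comparison principle conclude $w \leq \eta$, which is the $w$-part of \cref{ineq:bnt2}. The $|\nabla\phi|$-part will follow from an analogous Bochner-type inequality for $|\nabla\phi|^2 = |\partial_\nabla \phi|^2$ (using $\bar\partial_\nabla\phi = 0$): commuting $\partial_\nabla$ through the vortex equations produces source terms controlled by $w$, whose decay then propagates.

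The main obstacle will be the barrier construction close to the zero locus $D$: the $1/r$-singularity of $H(r)$ as $r \to 0$, together with the finite injectivity radius of $\Sigma$, means the exponential supersolution is only valid outside a shell of radius $\sim 1/\sqrt\tau$ around each zero of $\phi$, and these local estimates must be glued into a global pointwise bound with explicit constants depending on the geometry of $\Sigma$. The technicalities are worked out in detail in \cite{JT80, B90, T84}, whose constructions I would follow.
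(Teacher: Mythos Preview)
Your proposal is correct and follows the same strategy as the paper: the scalar identity $\left( \Delta + |\phi|^2 \right) w = |\partial_{\scriptscriptstyle \nabla}\phi|^2$ together with the maximum principle gives \cref{ineq:bnt1}, exactly as the paper records (citing Bradlow). For \cref{ineq:bnt2} the paper simply invokes \cite[Lemma~3.3]{T99} without further argument, whereas you sketch the barrier/comparison proof that underlies that citation; so your outline is more detailed than the paper's, but not a different route.
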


\begin{proof}
In \cite{B90}*{Proposition~5.2} Bradlow showed \cref{ineq:bnt1}, using the fact that $\tau$-vortex fields satisfy the elliptic equation
\begin{equation}
\left( \Delta + \left| \Phi \right|^2 \right) w = \left| \partial_\nabla \Phi \right|^2. \label{eq:bra}
\end{equation}
The right-hand side is positive away from a finite set, so the maximum principle and \cref{eq:wdef} implies \cref{ineq:bnt1}.  \Cref{ineq:bnt2} was proved in \cite{T99}*{Lemma~3.3}.
\end{proof}

We call a divisor {\em simple} if the multiplicity of every divisor point is 1.

\begin{lem} \label{lem:delta}
Fix a simple divisor $D \in \Sym^d \left( \Sigma \right)$ and a corresponding $\tau$-vortex field $\vor = \left( \nabla, \Phi \right)$ with $\tau > \tau_0$.  The smooth function
\begin{equation}
h_{D, \tau} = \frac{1}{2 \pi \tau} \left( |\partial_\nabla \Phi|^2 + 2 w^2 \right) \label{eq:hdtau}
\end{equation}
depends only on $D$ and $\tau$, but not on the choice of $\vor$.  Moreover,
\begin{equation}
\lim\limits_{\tau \rightarrow \infty} h_{D, \tau} = \sum\limits_{{p \in D}} \delta_p, \label{eq:limhdtau}
\end{equation}
in the sense of measures, where $\delta_p$ is the Dirac measure concentrated at the point $p \in \Sigma$.
\end{lem}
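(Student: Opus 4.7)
The lemma has two parts. For the gauge/divisor independence, note that both $|\partial_{\scriptscriptstyle \nabla}\phi|^2$ and $w = \tfrac{1}{2}(\tau - |\phi|^2)$ are gauge invariant (the hermitian norm is preserved by unitary gauge transformations, and $g(\partial_{\scriptscriptstyle \nabla}\phi) = \partial_{g\nabla g^{-1}}(g\phi)$), so $h_{\scriptscriptstyle D,\tau}$ is gauge invariant on $\P_\tau$. Bradlow's identification $\M_\tau \cong \Sym^d(\Sigma)$ makes any two $\tau$-vortex fields with divisor $D$ gauge equivalent, so $h_{\scriptscriptstyle D,\tau}$ descends to a well-defined function of the pair $(D, \tau)$.

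For the measure convergence, I would test $h_{\scriptscriptstyle D,\tau}$ against an arbitrary $\chi \in C^\infty(\Sigma)$. Substituting $|\phi|^2 = \tau - 2w$ into Bradlow's equation \eqref{eq:bra} produces the tidy identity $|\partial_{\scriptscriptstyle \nabla}\phi|^2 + 2w^2 = \Delta w + \tau w$, and then integration by parts (self-adjointness of $\Delta$ on the closed $\Sigma$) gives
\[
\int_\Sigma \chi\, h_{\scriptscriptstyle D,\tau}\, \omega \;=\; \frac{1}{2\pi\tau}\int_\Sigma (\Delta\chi)\, w\, \omega \;+\; \frac{1}{2\pi}\int_\Sigma \chi\, w\, \omega.
\]
Integrating \eqref{eq:vo1} over $\Sigma$ and applying Chern--Weil yields $\int_\Sigma w\,\omega = 2\pi d$, so the first term is $O(1/\tau)$. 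Everything then reduces to showing that $(2\pi)^{-1}\int_\Sigma \chi\, w\, \omega \to \sum_{p \in D}\chi(p)$, i.e., that the smooth 2-form $iF_{\scriptscriptstyle \nabla} = w\omega$ converges to $2\pi[D]$ as a current.

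To see this, fix small disjoint disks $B_\epsilon(p)$ around each $p \in D$. By \cref{ineq:bnt2}, $w = O(\tau\, e^{-\sqrt{\tau}\epsilon/c})$ on $\Sigma \setminus \bigcup_p B_\epsilon(p)$, so the integral of $\chi w$ over this complement vanishes exponentially. On each disk I would use a localized Chern--Weil computation: in a local trivialization of $L$ over $B_\epsilon(p)$, write $\nabla = d - iA$, so Stokes gives $\int_{B_\epsilon(p)} iF_{\scriptscriptstyle \nabla} = \int_{\partial B_\epsilon(p)} A$. Since $\phi \neq 0$ on $\partial B_\epsilon(p)$ for $\epsilon$ small, the unit-section frame $\phi/|\phi|$ is defined there, and it converts the connection form to $A' = A - d\arg\phi$. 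The $d\arg\phi$-piece integrates to $2\pi$ (the winding of the simple zero of $\phi$ at $p$), while in the unit-section frame $|A'| \leq |\nabla\phi|/|\phi| = O(\sqrt{\tau}\, e^{-\sqrt{\tau}\epsilon/c})$ on $\partial B_\epsilon(p)$ by \cref{ineq:bnt2}, so the residual boundary integral vanishes in the limit. Thus $\int_{B_\epsilon(p)} w\,\omega \to 2\pi$ for each fixed $\epsilon > 0$. Combined with the uniform continuity of $\chi$ on each $B_\epsilon(p)$---splitting $\chi = \chi(p) + (\chi - \chi(p))$ and letting $\tau \to \infty$ before $\epsilon \to 0$---this yields the required limit.

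The main technical obstacle is the last step: correctly matching the winding of the transition function $\phi/|\phi|$ with the multiplicity $1$ of the simple zero at each $p\in D$, and then using Taubes' exponential decay of $\nabla\phi$ to force the residual $\int_{\partial B_\epsilon(p)} A'$ to vanish uniformly in $\tau$.
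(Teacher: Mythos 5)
Your proof is correct, and its first half (gauge invariance plus Bradlow's identification of gauge-equivalence classes with divisors, the identity $|\partial_{\scriptscriptstyle \nabla}\phi|^2 + 2w^2 = \Delta w + \tau w$ obtained from \cref{eq:bra}, and the integration by parts against a test function) coincides with the paper's argument. Where you genuinely diverge is in handling the two resulting terms. The paper disposes of both by citing Hong--Jost--Struwe (their Theorem~1.1) as a black box for the concentration statement $w \to 2\pi\,\delta_{\scriptscriptstyle D}$ in the sense of measures: this makes the boundary term $\tfrac{1}{2\pi\tau}\int_\Sigma w\,\Delta f\,\omega$ vanish (a bounded quantity divided by $\tau$) and gives the limit of the main term immediately. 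You instead bound the first term by the elementary Chern--Weil identity $\int_\Sigma w\,\omega = 2\pi d$ (which needs $w \geqslant 0$, i.e.\ \cref{ineq:bnt1}), and you re-derive the concentration of $w$ from scratch: exponential decay of $w$ away from $D$ via \cref{ineq:bnt2}, plus a localized Stokes/winding computation on small disks, using the unit frame $\phi/|\phi|$ and the bound $|A'| \leqslant |\nabla\phi|/|\phi| = O(\sqrt{\tau}\,e^{-\sqrt{\tau}\ep/c})$ on the boundary circles (where $|\phi|^2 \geqslant \tau/2$ for large $\tau$, again by \cref{ineq:bnt2}). This is a valid and essentially self-contained substitute for the [HJS96] citation, resting only on \Cref{thm:bnt}, which is available at this point in the paper; the price is roughly a page of extra local analysis, and the small care needed in the $\tau\to\infty$ then $\ep\to 0$ ordering, which you correctly flag. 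One cosmetic point: your test function is called $\chi$, which collides with the cutoff $\chi$ introduced in \cref{eq:chip}; rename it to avoid confusion if this were merged into the text.
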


\begin{proof}
Every term in \cref{eq:hdtau} is gauge invariant, which proves the independence of the choice of $\vor$ for $D$.  Using \cref{eq:bra} we get $h_{D, \tau} = \tfrac{1}{2 \pi \tau} \left( \Delta w + \tau w \right)$, hence for any smooth function $f$:
\begin{align}
\int\limits_\Sigma h_{D, \tau} f \omega &= \frac{1}{2 \pi \tau} \int\limits_\Sigma \left( \Delta w + \tau w \right) f \omega \\
&= \frac{1}{2 \pi \tau} \int\limits_\Sigma w \left( \Delta f \right) \omega + \frac{1}{2 \pi} \int\limits_\Sigma w f \omega.
\end{align}
By \cite{HJS96}*{Theorem~1.1}, $w$ converges to $2 \pi \delta_D$ in the sense of measures as $\tau \rightarrow \infty$.  Thus the first term converges to 0, and the second term converges to $\sum\limits_{p \in D} f \left( p \right)$, which completes the proof.
\end{proof}

The space of simple divisors, $\Sym^d_s \left( \Sigma \right)$, is an open dense set in $\Sym^d \left( \Sigma \right)$, and its complement is called the {\em big diagonal}.  When $D$ is simple, a tangent vector in $T_D \Sym^d \left( \Sigma \right)$ can be given by specifying a tangent vector to $\Sigma$ at each divisor point. Thus the rank $d$ complex vector bundle $\K \rightarrow \Sym^d_s \left( \Sigma \right)$ defined by
\begin{equation}
\K_D = \mathop{\oplus}\limits_{p \in D} \overline{K}_p
\end{equation}
is isomorphic to $T^{0,1} \Sym^d_s \left( \Sigma \right)$.  We next use ideas of \cite{T99}*{Lemma~3.3} to construct an almost unitary isomorphism from $\K$ to $T^{0,1} \Sym^d_s \left( \Sigma \right)$.

Fix a simple divisor $D$, and let $\vor$ be a corresponding $\tau$-vortex field.  Define
\begin{equation}
\rho_D = \min \left( \left\{ \dist \left( p, q \right) \middle| p, q \in D \: \& \: p \neq q \right\} \cup \left\{ \mathrm{inj} \left( \Sigma, \omega \right) \right\} \right), \label{eq:rhod}
\end{equation}
where $\mathrm{inj} \left( \Sigma, \omega \right)$ is the injectivity radius of the metric.  Let $\chi$ be a smooth function on $[0, \infty)$ that satisfies $0 \leqslant \chi \leqslant 1$, $\chi|_{[0,1]} = 1$, and $\chi|_{[2, \infty)} = 0$, and set
\begin{equation}
\chi_p = \chi \left( \frac{2 \dist_p}{\rho_D} \right).  \label{eq:chip}
\end{equation}
For each $\Theta = \left\{ \theta_p \right\}_{p \in D} \in \K_D$ let $\widehat{\theta}_p$ be the extension of $\theta_p$ to the open ball of radius $\mathrm{inj} \left( \Sigma, \omega \right)$ centered at $p$ using the exponential map. Define a smooth section $\sigma_{\Theta}$ of $\overline{K}$ supported in neighborhood of $D$ by setting
\begin{equation}
\sigma_{\Theta} = \sum\limits_{p \in D} \chi_p \widehat{\theta}_p \label{eq:sigmatheta}
\end{equation}
and extending by 0 to all of $\Sigma$.  Note that $\sigma_{\Theta}$ satisfies
\begin{equation}
\sigma_{\Theta} \left( p \right) = \theta_p \qquad \&  \qquad \left| \nabla \sigma_{\Theta}\right| = O \left( \dist_D \right) \quad \forall p \in D.  \label{eq:sigmabound}
\end{equation}
Finally, for each such $\vor$ and $\Theta$ define $Y_{\vor, \Theta}$ as 
\begin{equation}
Y_{\vor, \Theta} = \frac{1}{\sqrt{2 \pi \tau}} \left( \sqrt{2} w \sigma_{\Theta}, i \Lambda \left( \sigma_{\Theta} \wedge \partial_\nabla \Phi \right) \right) \in \Omega^{0,1} \oplus \Omega^0_L, \label{eq:apt1}
\end{equation}
where again $w = \tfrac{1}{2} \left( \tau - \left| \Phi \right|^2 \right)$.  Note that $Y_{\vor, \Theta}$ is gauge equivariant, that is, for every $g \in \G$:
\begin{equation}
g_* Y_{\vor, \Theta} = Y_{g \left( \vor \right), \Theta}.  \label{eq:yequiv}
\end{equation}

The following analytic result is the key ingredient needed to compute the asymptotic curvature in \Cref{thm:cur} and holonomies in the \hyperlink{main:hol}{Main Theorem}.

\begin{thm} {\rm [The asymptotic form of horizontal vectors]} \label{thm:tan} For every $\vor \in \P_\tau$ and $\Theta \in \K_D$ as above, there is a unique $Z_{\vor, \Theta} \in \Omega^0 \oplus \Omega^{0,1}_L$ such that
\begin{equation}
X_{\vor, \Theta} = Y_{\vor, \Theta} - \L_\vor^* \left( Z_{\vor, \Theta} \right)  \label{eq:apt2}
\end{equation}
is a horizontal tangent vector at $\vor$.  Moreover, the following asymptotic estimates hold:
\begin{enumerate}
 \item{\rm [$L^2$-estimate]} $\| Y_{\vor, \Theta} \|^2_{L^2 \left( \Sigma \right)} \rightarrow \sum\limits_{p \in D} \left| \theta_p \right|^2$ as $\tau \rightarrow \infty$.  
 \item{\rm [Pointwise bound]} $\left| X_{\vor, \Theta} - Y_{\vor, \Theta} \right| = O \left( \tau^{-1/2} \exp \left(- \tfrac{\sqrt{\tau} \dist_D}{c} \right) \right)$,
where $\dist_D$ is the distance from $D$, and $c$ is the positive number from \Cref{thm:bnt}.
\end{enumerate}
\end{thm}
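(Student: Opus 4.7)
The plan is to first establish the existence and uniqueness of $Z_{\scriptscriptstyle \vor, \Theta}$ via Fredholm theory for the operator $\L_{\scriptscriptstyle \vor} \L_{\scriptscriptstyle \vor}^*$, and then derive the two asymptotic estimates separately. For existence, note that by \Cref{lem:newtm} the requirement that $X_{\scriptscriptstyle \vor, \Theta} = Y_{\scriptscriptstyle \vor, \Theta} - \L_{\scriptscriptstyle \vor}^*(Z_{\scriptscriptstyle \vor, \Theta})$ be horizontal is $\L_{\scriptscriptstyle \vor}(X_{\scriptscriptstyle \vor, \Theta}) = 0$, i.e.,
\[
\L_{\scriptscriptstyle \vor} \L_{\scriptscriptstyle \vor}^* (Z_{\scriptscriptstyle \vor, \Theta}) \ = \ \L_{\scriptscriptstyle \vor} (Y_{\scriptscriptstyle \vor, \Theta}).
\]
By \Cref{lem:van} together with \cref{eq:asq}, $\L_{\scriptscriptstyle \vor} \L_{\scriptscriptstyle \vor}^* = \D_{\scriptscriptstyle \nabla} \D_{\scriptscriptstyle \nabla}^* + |\phi|^2$ is a non-negative, second-order, formally self-adjoint elliptic operator, and \Cref{cor:ind} gives $\ker \L_{\scriptscriptstyle \vor}^* = 0$, hence $\ker (\L_{\scriptscriptstyle \vor} \L_{\scriptscriptstyle \vor}^*) = 0$. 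Self-adjointness and elliptic Fredholm theory then imply that $\L_{\scriptscriptstyle \vor} \L_{\scriptscriptstyle \vor}^*$ is an isomorphism on the relevant Sobolev spaces, giving the unique $Z_{\scriptscriptstyle \vor, \Theta} = (\L_{\scriptscriptstyle \vor} \L_{\scriptscriptstyle \vor}^*)^{-1} \L_{\scriptscriptstyle \vor} (Y_{\scriptscriptstyle \vor, \Theta})$.

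The $L^2$-estimate reduces to \Cref{lem:delta} by a direct pointwise computation. Using \cref{eq:apt1} and the algebraic identity $|i \Lambda (\sigma \wedge \xi)|^2 = |\sigma|^2 |\xi|^2$ for $\sigma \in \Omega^{0,1}$ and $\xi \in \Omega^{1,0}_L$ (valid because $\dim_{\cx} \Sigma = 1$), one obtains
\[
|Y_{\scriptscriptstyle \vor, \Theta}|^2 \ = \ \tfrac{1}{2 \pi \tau}\bigl(|\partial_{\scriptscriptstyle \nabla} \phi|^2 + 2 w^2\bigr) |\sigma_{\scriptscriptstyle \Theta}|^2 \ = \ h_{\scriptscriptstyle D, \tau} \, |\sigma_{\scriptscriptstyle \Theta}|^2.
\]
Integrating and applying \cref{eq:limhdtau}, together with $\sigma_{\scriptscriptstyle \Theta}(p) = \theta_p$ from \cref{eq:sigmabound} and continuity of $|\sigma_{\scriptscriptstyle \Theta}|^2$, yields $\|Y_{\scriptscriptstyle \vor, \Theta}\|_{L^2}^2 \to \sum_{p \in D} |\theta_p|^2$.

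The pointwise bound is the technical heart and proceeds in two stages. First, expand $\L_{\scriptscriptstyle \vor} Y_{\scriptscriptstyle \vor, \Theta}$ from the definitions of $\D_{\scriptscriptstyle \nabla}$ and $\A_{\scriptscriptstyle \phi}$, using $\del_{\scriptscriptstyle \nabla} \phi = 0$ and smoothness of $\sigma_{\scriptscriptstyle \Theta}$; each term contains at least one factor of $w$ or $\partial_{\scriptscriptstyle \nabla} \phi$, so by Bradlow--Taubes \cref{ineq:bnt2} we get a pointwise bound $|\L_{\scriptscriptstyle \vor} Y_{\scriptscriptstyle \vor, \Theta}| \leq C \sqrt{\tau} \exp \bigl( -\sqrt{\tau} \dist_{\scriptscriptstyle D} / c \bigr)$. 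Second, since $|\phi|^2 \geq \tau - O\bigl(\tau \exp(-\sqrt{\tau} \dist_{\scriptscriptstyle D} / c)\bigr)$ by \cref{ineq:bnt1,ineq:bnt2}, the operator $\L_{\scriptscriptstyle \vor} \L_{\scriptscriptstyle \vor}^* = \D_{\scriptscriptstyle \nabla} \D_{\scriptscriptstyle \nabla}^* + |\phi|^2$ is coercive with spectral gap of order $\tau$ off any fixed neighborhood of $D$. An Agmon-type weighted energy estimate against the weight $\exp\bigl(\sqrt{\tau} \dist_{\scriptscriptstyle D} / c'\bigr)$ for $c' > c$ then propagates the exponential decay of $\L_{\scriptscriptstyle \vor} Y_{\scriptscriptstyle \vor, \Theta}$ to $\L_{\scriptscriptstyle \vor}^* Z_{\scriptscriptstyle \vor, \Theta}$, yielding the claimed bound on $X_{\scriptscriptstyle \vor, \Theta} - Y_{\scriptscriptstyle \vor, \Theta} = -\L_{\scriptscriptstyle \vor}^*(Z_{\scriptscriptstyle \vor, \Theta})$.

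The main obstacle is this final weighted estimate. Naive $L^2$-theory only controls $\|\L_{\scriptscriptstyle \vor}^* Z\|_{L^2}$ by $\|Y\|_{L^2}$, which is $O(1)$; extracting both the sharp $\tau^{-1/2}$ prefactor and the exponential decay requires careful tracking of the $\tau$-dependence in the weighted Agmon inequality and of the degeneracy of the potential $|\phi|^2$ near $D$, modeled on the analytic framework developed by Taubes for vortex-type equations.
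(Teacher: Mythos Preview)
Your treatment of existence/uniqueness and of the $L^2$-estimate matches the paper's proof exactly. The gap is in Statement~(2): your bound $|\L_{\scriptscriptstyle \vor} Y_{\scriptscriptstyle \vor, \Theta}| \leq C\sqrt{\tau}\exp(-\sqrt{\tau}\dist_{\scriptscriptstyle D}/c)$, while correct, is not sharp enough to yield the $\tau^{-1/2}$ prefactor, and no amount of Agmon analysis will recover it. The approximate solution $Y_{\scriptscriptstyle \vor, \Theta}$ is designed so that when you compute $\L_{\scriptscriptstyle \vor} Y_{\scriptscriptstyle \vor, \Theta}$ the leading terms \emph{cancel}: what survives involves only $\del^*\sigma_{\scriptscriptstyle \Theta}$ and $\nabla^{0,1}\sigma_{\scriptscriptstyle \Theta}$, never $\sigma_{\scriptscriptstyle \Theta}$ itself. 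Since $|\nabla\sigma_{\scriptscriptstyle \Theta}| = O(\dist_{\scriptscriptstyle D})$ by \eqref{eq:sigmabound}, one actually has $|\L_{\scriptscriptstyle \vor} Y_{\scriptscriptstyle \vor, \Theta}| = O\bigl(\sqrt{\tau}\,\dist_{\scriptscriptstyle D}\,e^{-\sqrt{\tau}\dist_{\scriptscriptstyle D}/c}\bigr)$, and the extra factor $\dist_{\scriptscriptstyle D}$ buys exactly one power of $\tau^{-1/2}$ after absorption into the exponential.

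To see why this matters, write $Z_{\scriptscriptstyle \vor, \Theta} = (f,\xi)$ and note that $\L_{\scriptscriptstyle \vor}^* Z$ contains the zeroth-order terms $f\phi$ and $h(\phi,\xi)$ with $|\phi| \sim \sqrt{\tau}$. With only your bound on the source, Green's-function inversion (or any equivalent) gives $|f|,|\xi| = O(\tau^{-1/2}e^{-\ldots})$, hence $|\phi||f| = O(e^{-\ldots})$ --- off by $\tau^{1/2}$. The paper exploits the cancellation to get the decoupled equations
\[
\bigl(\tfrac{1}{2}\Delta + \tfrac{1}{2}|\phi|^2\bigr)f = \tfrac{1}{\sqrt{8\pi\tau}}\,w\,\del^*\sigma_{\scriptscriptstyle \Theta},
\qquad
\bigl(\del_{\scriptscriptstyle \nabla}\del_{\scriptscriptstyle \nabla}^* + \tfrac{1}{2}|\phi|^2\bigr)\xi = \tfrac{i}{\sqrt{4\pi\tau}}\,\Lambda\bigl(\nabla^{0,1}\sigma_{\scriptscriptstyle \Theta}\wedge\partial_{\scriptscriptstyle \nabla}\phi\bigr),
\]
and then uses Taubes' explicit pointwise Green's-function bounds (together with Kato's inequality for the $\xi$-equation) rather than weighted energy estimates; these bounds already encode the behaviour near $D$ where $|\phi|^2$ degenerates, so no separate treatment of that region is needed.
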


\Cref{eq:apt2} defines a bundle map from $\K$ to $T^{0,1} \Sym^d_s \left( \Sigma \right)$ by
\begin{equation}
\left( D, \Theta \right) \mapsto \left( D, \Pi_* \left( X_{\vor, \Theta} \right) \right),  \label{eq:ktosym}
\end{equation}
where $\vor$ is any $\tau$-vortex field corresponding to the divisor $D$, and $\Pi$ is the projection from $\P_\tau$ to $\M_\tau \cong \Sym^d \left( \Sigma \right)$.  By \cref{eq:yequiv}, the map \eqref{eq:ktosym} does not depend on the choice of $\vor$.  Furthermore, this map is almost unitary by Statements~(1)~and~(2).  Similar results have only been known for flat metrics \cite{T99}*{Lemma~3.3}.

\begin{proof}[Proof of \Cref{thm:tan}:]
Fix $Y_{\vor, \Theta}$ as in \cref{eq:apt1}.  Since $\L_\vor$ is elliptic, and $\ker \left( \L_\vor^* \right) = \{ 0 \}$, by \Cref{cor:ind}, the operator $\L_\vor \L_\vor^*$ is has a bounded inverse $\left( \L_\vor \L_\vor^* \right)^{-1}$.  Thus the equation
\begin{equation}
\L_\vor \left( Y_{\vor, \Theta} - \L_\vor^* \left( Z_{\vor, \Theta} \right) \right) = 0  \label{eq:zvt}
\end{equation}
has a unique solution  for $Z_{\vor, \Theta}$ given by
\begin{equation}
Z_{\vor, \Theta} = \left( \L_\vor \L_\vor^* \right)^{-1} \left( \L_\vor \left( Y_{\vor, \Theta} \right) \right)  \label{eq:zvorth}
\end{equation}
Consequently $X_{\vor, \Theta}$ in \cref{eq:apt2} is horizontal.

The pointwise norm of $Y_{\vor, \Theta}$ satisfies
\begin{equation}
\left| Y_{\vor, \Theta} \right|^2 = h_{D, \tau} \left| \sigma_{\Theta} \right|^2,
\end{equation}
where $h_{D, \tau}$ is defined in \cref{eq:hdtau}.  Using \cref{eq:limhdtau}, one gets Statement~(1).

In order to prove Statement~(2), we put $Z_{\vor, \Theta} = \left( f_{\vor, \Theta}, \xi_{\vor, \Theta} \right) \in \Omega^0 \oplus \Omega^{0,1}_L$ in \cref{eq:zvt} to obtain the equations
\begin{subequations}
\begin{align}
\left( \frac{1}{2} \Delta + \frac{1}{2} \left| \Phi \right|^2 \right) f_{\vor, \Theta} &= \frac{1}{\sqrt{8 \pi \tau}}w \del^* \sigma_{\Theta}, \label{eq:hf} \\
\left( \del_\nabla \del_\nabla^* + \frac{1}{2} \left| \Phi \right|^2 \right) \xi_{\vor, \Theta} &= \frac{i}{\sqrt{4 \pi \tau}} \Lambda \left( \nabla^{0,1} \sigma_{\Theta} \wedge \partial_\nabla \Phi \right).  \label{eq:hxi}
\end{align}
\end{subequations}

Let $G_{\left[ \Phi \right]}$ be the Green's operator of the non-degenerate elliptic operator $H_{\left[ \Phi \right]} = \tfrac{1}{2} \Delta + \tfrac{1}{2} \left| \Phi \right|^2$ on $\Omega^0$.  Both $H_{\left[ \Phi \right]}$ and $G_{\left[ \Phi \right]}$ depend only on the gauge equivalence class of $\Phi$.  By an abuse of notation $G_{\left[ \Phi \right]}$ will also denote the corresponding Green's function, which is a positive, symmetric function on $\Sigma \times \Sigma$ with a logarithmic singularity along the diagonal.  With this definitions, we can write $f_{\vor, \Theta}$ as
\begin{equation}
f_{\vor, \Theta} = \frac{1}{\sqrt{8 \pi \tau}} \int\limits_\Sigma G_{\left[ \Phi \right]} w \del^* \sigma_{\Theta} \omega. \label{eq:fblue}
\end{equation}
Standard elliptic theory gives the following bounds on the Green's function:
\begin{subequations}
\begin{align}
G_{\left[ \Phi \right]} \left( x, y \right) & \leqslant c \left( 1 + \left| \ln \left( \sqrt{\tau} \dist \left( x, y \right) \right) \right| \right) \exp \left( - \frac{\sqrt{\tau} \dist \left( x, y \right)}{c} \right) \label[ineq]{ineq:green1} \\
\left| dG_{\left[ \Phi \right]} \left( x, y \right) \right| & \leqslant  \frac{c}{\sqrt{\tau} \dist \left( x, y \right)} \exp \left( - \frac{\sqrt{\tau} \dist \left( x, y \right)}{c} \right) \label[ineq]{ineq:green2}
\end{align}
\end{subequations}
for some $c \in \rl_+$ independent of $\tau$ or $D$ (see \cite{T99}*{Equation (6.10)}).  Using \cref{eq:fblue,ineq:green1,ineq:green2}, together with the bound on $w$ in \cref{ineq:bnt2} and on $\left| \nabla \sigma_{\Theta} \right|$ in \cref{eq:sigmabound} we get (after possibly increasing $c$)
\begin{subequations}
\begin{align}
\left| f_{\vor, \Theta} \right| &\leqslant \frac{c}{\tau} \exp \left( - \frac{\sqrt{\tau} \dist_D}{c} \right) \label[ineq]{ineq:f1} \\
\left| \del f_{\vor, \Theta} \right| &\leqslant \frac{c}{\sqrt{\tau}} \exp \left( - \frac{\sqrt{\tau} \dist_D}{c} \right).  \label[ineq]{ineq:f2}
\end{align}
\end{subequations}
Before turning our attention to \cref{eq:hxi}, note that we have the following two scalar identities
\begin{subequations}
\begin{align}
\Delta \left| \xi_{\vor, \Theta} \right|^2 &= 2 \Re \left( \left\langle \xi_{\vor, \Theta} \middle| \nabla^* \nabla \xi_{\vor, \Theta} \right\rangle \right) - 2 \left| \nabla \xi_{\vor, \Theta} \right|^2.  \label{eq:deltaxi1} \\
\Delta \left| \xi_{\vor, \Theta} \right|^2 &= 2 \left| \xi_{\vor, \Theta} \right| \Delta \left| \xi_{\vor, \Theta} \right| - 2 \left| d \left| \xi_{\vor, \Theta} \right| \right|^2,  \label{eq:deltaxi2}
\end{align}
\end{subequations}
Using the ``K\"ahler identity'' on $\Omega^{0,1}_L$ 
\begin{equation}
\nabla^* \nabla = 2 \del_\nabla \del_\nabla^* - i \Lambda F_\nabla = 2 \del_\nabla \del_\nabla^* - \frac{1}{2} \left( \tau - \left| \Phi \right|^2 \right),  \label{eq:}
\end{equation}
in \cref{eq:hxi}, together with the Cauchy-Schwarz inequality in \cref{eq:deltaxi1}, and Kato's inequality (cf. \cite{FU91}*{Equation~(6.20)})
\begin{equation}
\left| d \left| \xi_{\vor, \Theta} \right| \right| \leqslant \left| \nabla \xi_{\vor, \Theta} \right| \label[ineq]{ineq:kato}
\end{equation}
gives us
\begin{equation}
\left( \Delta + \frac{1}{2} \tau \right) \left| \xi_{\vor, \Theta} \right| \leqslant \frac{c}{\sqrt{\tau}} \left| \nabla \sigma_{\Theta} \right| \left|\del_\nabla \Phi \right|.  \label{eq:deltatauxi}
\end{equation}
The bound on $\left| \del_\nabla \Phi \right|$ in \cref{ineq:bnt2} and the bound on $\left| \nabla \sigma_{\Theta} \right|$ in \cref{eq:sigmabound}, together with \cref{eq:deltatauxi}, gives us (again after possibly increasing $c$)
\begin{equation}
\left| \xi_{\vor, \Theta} \right| \leqslant \frac{c}{\tau} \exp \left( - \frac{\sqrt{\tau} \dist_D}{c} \right).  \label[ineq]{ineq:xi1}
\end{equation}
Applying $\del_\nabla^*$ to \cref{eq:hxi} gives an elliptic equation on $\del^*_\nabla \xi_{\vor, \Theta}$.  Similarly to the previous computation we get the following inequality:
\begin{equation}
\left( \Delta + \frac{1}{2} \tau \right) \left| \del^*_\nabla \xi_{\vor, \Theta} \right| \leqslant c \sqrt{\tau} \exp \left( - \frac{\sqrt{\tau} \dist_D}{c} \right).
\end{equation}
Thus (after possibly increasing $c$ one last time)
\begin{equation}
\left| \del^*_\nabla \xi_{\vor, \Theta} \right| \leqslant \frac{c}{\sqrt{\tau}} \exp \left( - \frac{\sqrt{\tau} \dist_D}{c} \right).  \label[ineq]{ineq:xi2}
\end{equation}
Finally, \cref{ineq:f1,ineq:f2,ineq:xi1,ineq:xi2} give us
\begin{align}
\left| X_{\vor, \Theta} - Y_{\vor, \Theta} \right| &= \left| \L^*_\vor \left( Z_{\vor, \Theta} \right) \right| \\
&\leqslant \sqrt{2} \left| \del f_{\vor, \Theta} \right| + \left| \Phi \right| \left| f \right| + \sqrt{2} \left| \del^*_\nabla \xi \right| + \left| \Phi \right| \left| \xi \right| \\
&= O \left( \tau^{-1/2} \exp \left( - \frac{\sqrt{\tau} \dist_D}{c} \right) \right),
\end{align}
which completes the proof of Statement~(2).
\end{proof}

\smallskip

\section{The Berry Connection}
\label{sec:ber}

The {\em $\tau$-vortex principal bundle} is the principal $\G$-bundle $\Pi: \P_\tau \rightarrow \M_\tau$ described in \Cref{sec:glr}, with $\Pi \left( \vor \right) = \left[ \vor \right]$.  In \Cref{lem:tb} we constructed a horizontal distribution on the $\tau$-vortex principal bundle, which is the orthogonal complement of the kernel of $\Pi_*$.  This distribution is $\G$-invariant, so is a connection in the distributional sense (cf.  \cite{KN69}*{Chapter~II}), which we call the {\em Berry connection}.  The corresponding connection 1-form is the unique $\Lie \left( \G \right)$-valued 1-form $A$ that satisfies the three conditions:
\begin{enumerate}
 \item $\ker \left( A_\vor \right)$ is the horizontal subspace at $\vor \in \P_\tau$,
 \item $\left( g^*A \right)_{g \left( \vor \right)} = \rm{ad}_g \left( A_\vor \right)$, for all $g \in \G$,
 \item $A \left( X_f \right) = i f$, for all $i f \in W^{2,2} \left( \Sigma; i \rl \right) \cong \Lie \left( \G \right)$, where $X_f \left( \vor \right) = \left( -i df, i f \Phi \right)$, as defined in \Cref{lem:tm}.
\end{enumerate}
The next lemma gives a formula for $A_\vor$.  Recall that for each $\tau$-vortex field $\vor = \left( \nabla, \Phi \right)$, the Green's operator $G_{\left[ \Phi \right]}$ is the inverse of the non-degenerate elliptic operator $H_{\left[ \Phi \right]} = \tfrac{1}{2} \Delta + \tfrac{1}{2} \left| \Phi \right|^2$.

\begin{lem} \label{lem:ab}
The $\Lie(\G)$-valued 1-form on $\P_\tau$ defined as
\begin{equation}
A_\vor \left( a, \psi \right) = - \frac{1}{2} G_{\left[ \Phi \right]} \left( d^* a + i \: \Im(h(\psi, \Phi)) \right) \label{eq:ab}
\end{equation}
is the connection 1-form corresponding to the Berry connection.
\end{lem}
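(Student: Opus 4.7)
The plan is to directly verify that the right-hand side of \eqref{eq:ab} satisfies the three defining properties (1)--(3) of the Berry connection 1-form. Uniqueness of such a form is automatic from the three conditions, so it suffices to check each condition.

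First, for condition (1), I would use the characterization of horizontality obtained in the proof of \Cref{lem:tm}: a tangent vector $(a, \psi) \in T_\vor \P_\tau$ is horizontal if and only if $d^* a + i \Im(h(\psi, \phi)) = 0$. Because $H_{\scriptscriptstyle [\phi]} = \tfrac{1}{2} \Delta + \tfrac{1}{2} |\phi|^2$ is a strictly positive self-adjoint elliptic operator on $\Omega^0$, its Green's operator $G_{\scriptscriptstyle [\phi]}$ is injective. Hence $A_{\scriptscriptstyle \vor}(a, \psi) = 0$ is equivalent to $d^* a + i \Im(h(\psi, \phi)) = 0$, giving (1).

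Next, I would check condition (2), which for the Abelian group $\G$ reduces to showing that $A$ is $\G$-invariant, i.e.\ $(g^* A)_{\scriptscriptstyle g(\vor)} = A_{\scriptscriptstyle \vor}$. The gauge action differentiates to $g_*(a, \psi) = (a, g \psi)$ at $g(\vor) = (g \nabla g^{-1}, g \phi)$. Since $h$ is hermitian and $|g| = 1$, we have $h(g \psi, g \phi) = h(\psi, \phi)$, and since $|g \phi|^2 = |\phi|^2$, both $H_{\scriptscriptstyle [\phi]}$ and $G_{\scriptscriptstyle [\phi]}$ depend only on the gauge equivalence class $[\phi]$. Substituting into the formula proves the invariance.

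For condition (3), I would carry out a direct calculation. Plugging in $X_f(\vor) = (-i df, i f \phi)$ with $f \in C^\infty(\Sigma; \rl)$:
\begin{equation*}
d^* (-i df) + i \Im(h(i f \phi, \phi)) \ = \ -i \Delta f - i f |\phi|^2 \ = \ -2 H_{\scriptscriptstyle [\phi]}(i f),
\end{equation*}
where the computation of the imaginary part uses the hermitian convention adopted in the proof of \Cref{lem:tm}. Applying $-\tfrac{1}{2} G_{\scriptscriptstyle [\phi]}$ to both sides yields $A_{\scriptscriptstyle \vor}(X_f) = i f$, establishing (3).

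The only real subtlety is a sign-bookkeeping one in step (3), since a wrong hermitian convention would flip the sign of the $|\phi|^2$ term and destroy the identification with $H_{\scriptscriptstyle [\phi]}$; matching the sign with the integration-by-parts computation already carried out in the proof of \Cref{lem:tm} resolves this unambiguously. Beyond that, the proof is a direct verification.
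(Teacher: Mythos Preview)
Your proof is correct and follows essentially the same approach as the paper's own proof: both verify conditions (1)--(3) directly, using the horizontality criterion from \Cref{lem:tm} together with injectivity of $G_{\scriptscriptstyle [\phi]}$ for (1), the gauge invariance of $h(\psi,\phi)$ and $|\phi|^2$ for (2), and the direct computation $d^*(-idf) + i\,\Im(h(if\phi,\phi)) = -2H_{\scriptscriptstyle [\phi]}(if)$ for (3).
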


\begin{proof}  The right-hand side of \cref{eq:ab} is the composition of the non-degenerate Green's operator and a $\Lie \left( \G \right)$-valued 1-form.  In the proof of \Cref{lem:tb} we saw that the kernel of this 1-form is exactly the horizontal subspace.  This proves Condition (1) above.

Because $\G$ is Abelian, the adjoint representation of $\G$ is trivial, and hence the Condition (2) reduces to $\left( g^*A \right)_{g \left( \vor \right)} = A_\vor$.  Since $g_* \left( a, \psi \right) = \left( a, g \psi \right)$, we have
\begin{align}
g^* A_{g \left( \vor \right)} \left( a, \psi \right) &= \frac{-1}{2} G_{\left[ \Phi \right]} \left( d^*a + i \: \Im \left( h \left( g \psi, g \Phi \right) \right) \right) \\
&= \frac{-1}{2} G_{\left[ \Phi \right]} \left( d^*a + i \: \Im \left( h \left( \psi, \Phi \right) \right) \right),
\end{align}
thus $g^* A_{g \left( \vor \right)} = A_{\vor}$.  This proves Condition (2).

Finally, we show that $A$ is the canonical isomorphism between the fibers of the vertical bundle and the Lie algebra of $\G$, that is $A \left( X_f \right) = i f$  for every $f \in C^\infty \left( \Sigma; \rl \right)$:
\begin{align}
A_\vor \left( X_f \right) &= \frac{-1}{2} G_{\left[ \Phi \right]} \left( d^* \left( -i df \right) + i \: \Im \left( h \left( i f \Phi, \Phi \right) \right) \right) \omega \\
&= i G_{\left[ \Phi \right]} \left( \left( \frac{1}{2} \Delta f + \frac{1}{2} \left| \Phi \right|^2 \right) f \right) \\
&= if,
\end{align}
thus Condition (3) holds.  \end{proof}

We can use \Cref{lem:ab} to compute the curvature 2-form of the Berry connection.  Since $\G$ is Abelian, the curvature, called the {\em Berry curvature}, is a $\Lie \left( \G  \right)$-valued 2-form which descends to the base space $\M_\tau$.

\begin{thm} \label{thm:ber}
The curvature 2-form of the Berry connection at $\left[ \vor \right] \in \M_\tau$ is
\begin{equation}
\Omega_{\left[ \vor \right]} \left( X, Y \right) = G_{\left[ \Phi \right]} \left( i \: \Im \left( h \left( \psi_X, \psi_Y \right) \right) \right) \label{eq:cur}
\end{equation}
where $\left( a_X, \psi_X \right)$ and $\left( a_Y, \psi_Y \right)$ are the horizontal lifts of $X$ and $Y$, respectively, at $\vor$.  Moreover, \cref{eq:cur} does not depend on the choice of the $\tau$-vortex field $\vor$ representing $\left[ \vor \right]$.
\end{thm}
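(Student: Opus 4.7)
\emph{Proposal.} My plan is to invoke the abelian structure of $\G$, which implies $\Omega = dA + \tfrac{1}{2}[A,A] = dA$, and then compute $dA$ directly from \cref{eq:ab}. Since this formula makes sense for any $(\nabla, \phi)$ with $|\phi|$ not identically zero, and since a $\tau$-vortex field has $\phi \not\equiv 0$ for $\tau > \tau_{\scriptscriptstyle 0}$ (by integrating \cref{eq:vo1}), I can extend $A$ to an open neighborhood of $\P_\tau$ inside the affine space $\C_L \times \Omega^0_L$.  The exterior derivative of this extension, restricted to vectors tangent to $\P_\tau$, agrees with $d(A|_{\P_\tau})$ since inclusion commutes with $d$, so I may work in the ambient affine space.

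Next I would take horizontal lifts $\tilde X = (a_X, \psi_X)$ and $\tilde Y = (a_Y, \psi_Y)$ at $\vor$ and extend them as constant vector fields on $\C_L \times \Omega^0_L$.  These extensions commute, so Cartan's formula reduces to $dA(\tilde X, \tilde Y)|_\vor = \tilde X(A(\tilde Y))|_\vor - \tilde Y(A(\tilde X))|_\vor$.  The crucial computation is
\begin{equation*}
\tilde X(A(\tilde Y))|_\vor \ = \ \tfrac{d}{dt}\bigg|_{t=0}\!\left[ -\tfrac{1}{2} G_{\scriptscriptstyle [\phi + t\psi_X]}\!\bigl( d^*a_Y + i\,\Im( h(\psi_Y, \phi + t\psi_X)) \bigr) \right].
\end{equation*}
The product rule produces two contributions: the variation of the Green's operator applied to $d^*a_Y + i\,\Im(h(\psi_Y, \phi))$, and $G_{\scriptscriptstyle [\phi]}$ applied to the $t$-derivative of the argument.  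The horizontality of $\tilde Y$ --- which, as derived in the proof of \Cref{lem:tm}, is precisely the identity $d^*a_Y + i\,\Im(h(\psi_Y, \phi)) = 0$ --- makes the first contribution vanish, so $\tilde X(A(\tilde Y))|_\vor = -\tfrac{1}{2} G_{\scriptscriptstyle [\phi]}(i\,\Im(h(\psi_Y, \psi_X)))$.  Antisymmetrizing and using the Hermitian identity $\Im(h(\psi_Y, \psi_X)) = -\Im(h(\psi_X, \psi_Y))$ gives \cref{eq:cur}.

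The main obstacle I anticipate is the variation of $G_{\scriptscriptstyle [\phi]}$ in $\phi$, whose explicit form is awkward; this is sidestepped entirely by the horizontality condition, which forces the Green's operator to act on zero exactly where its derivative would appear.  Finally, the $\vor$-independence of \cref{eq:cur} follows from two gauge invariances: under $g \in \G$ the horizontal lifts transform as $(a, \psi) \mapsto (a, g\psi)$, so $h(g\psi_X, g\psi_Y) = h(\psi_X, \psi_Y)$ by unitarity of $g$, and $G_{\scriptscriptstyle [g\phi]} = G_{\scriptscriptstyle [\phi]}$ because $H_{\scriptscriptstyle [\phi]}$ depends on $\phi$ only through $|\phi|^2$.
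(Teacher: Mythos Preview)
Your proposal is correct and follows essentially the same route as the paper: both compute $dA$ via Cartan's formula on commuting extensions of the horizontal lifts, observe that the variation of $G_{[\phi]}$ lands on the expression $d^*a_Y + i\,\Im(h(\psi_Y,\phi))$ which vanishes by horizontality, and then antisymmetrize. The only cosmetic difference is that you extend $A$ to the ambient affine space and use constant vector fields (which commute automatically), whereas the paper stays on $\P_\tau$ and simply stipulates extensions with vanishing bracket at $\vor$; both devices serve the same purpose of killing the commutator term.
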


\begin{proof}
The claim about the independence of the choice $\vor$ is immediate since everything on the right-hand side is gauge invariant.

The curvature is the unique $\Lie \left( \G \right)$-valued 2-form $\Omega$ on $\M_\tau$ that satisfies $\Pi^* \left( \Omega \right) = dA$, where $\Pi$ is the projection from $\P_\tau$ to $\M_\tau$.  Thus it is enough to compute $dA_\vor \left( \left( a_X, \psi_X \right), \left( a_Y, \psi_Y \right) \right)$ and compare it with \cref{eq:cur}.  Recall, that the formula for the exterior derivative
\begin{equation}
dA \left( \widetilde{X}, \widetilde{X} \right) = \widetilde{X} \left( A \left( \widetilde{Y} \right) \right) - \widetilde{Y} \left( A \left( \widetilde{X} \right) \right) - A \left( \left[ \widetilde{X}, \widetilde{Y} \right] \right), \label{eq:ext}
\end{equation}
where $\widetilde{X}$ and $\widetilde{Y}$ are smooth local extensions of $\left( a_X, \psi_X \right)$ and $\left( a_Y, \psi_Y \right)$ respectively.  Choose the extensions so that their Lie bracket vanishes at $\vor$.  Let $\Upsilon_t$ be the local flow generated by $\widetilde{X}$, so $\Upsilon_t \left( \vor \right) = \vor + t \: \left( a_X, \psi_X \right) + O \left( t^2 \right)$.  Since $A \left( \widetilde{Y} \right) = 0$ at $\Upsilon_0 (\vor) = \vor$, we have
\begin{equation}
\widetilde{X}_\vor \left( A \left( \widetilde{Y} \right) \right) = \lim\limits_{t \rightarrow 0} \frac{1}{t} A_{\Upsilon_t \left( \vor \right)} \left( \widetilde{Y} \left( \Upsilon_t \left( \vor \right) \right) \right).
\end{equation}
Note that $\widetilde{Y} \left( \Upsilon_t \left( \vor \right) \right) = \left( \left( \Upsilon_t \right)_* \left( a_Y, \psi_Y \right) \right) + O \left( t^2 \right)$, because $\left[ \widetilde{X}, \widetilde{Y} \right]_\vor = 0$.  Finally, let us write
\begin{equation}
G_{\left[ \Phi + t \: \psi_X + O \left( t^2 \right) \right]} = G_{\left[ \Phi \right]} + t \: G_{\left[ \Phi \right]}^X + O \left( t^2 \right).
\end{equation}
Keeping only the linear terms, we obtain
\begin{align}
\widetilde{X}_\vor \left( A \left( \widetilde{Y} \right) \right) &= \lim\limits_{t \rightarrow 0} \frac{1}{t} A_{\Upsilon_t \left( \vor \right)} \left( \widetilde{Y} \left( \Upsilon_t \left( \vor \right) \right) \right) \\
&= \lim\limits_{t \rightarrow 0} - \frac{1}{2t} G_{\left[ \Phi + t \: \psi_X + O \left( t^2 \right) \right]} \left( d^*a_Y + i \: \Im \left( h \left( \psi_Y, \Phi + t \: \psi_X \right) \right) + O \left( t^2 \right) \right) \\
&= \lim\limits_{t \rightarrow 0} - \frac{1}{2t} \left( it G_{\left[ \Phi \right]} \left( \Im \left( h \left( \psi_Y, \psi_X \right) \right) \right) + t \: G_{\left[ \Phi \right]}^X \left( d^*a_Y + i \: \Im \left( h \left( \psi_Y, \Phi \right) \right) \right) \right) \\
&= - \frac{i}{2} G_{\left[ \Phi \right]} \left( \Im \left( h \left( \psi_Y, \psi_X \right) \right) \right),
\end{align}
where we used the fact that $d^*a_Y + i \: \Im \left( h \left( \psi_Y, \Phi \right) \right) = 0$ for tangent vectors.  Interchanging $\widetilde{X}$ and $\widetilde{Y}$ changes sign, since $\Im \left( h \left( \psi_Y, \psi_X \right) \right)$ is skew.  Substituting these into \cref{eq:ext}, and noting that the commutator vanishes, gives \cref{eq:cur}.  \end{proof}

\smallskip

\section{The Asymptotic Berry Curvature}
\label{sec:cur}

In this section we use \Cref{thm:tan,thm:ber} to analyze the Berry curvature in the large $\tau$ limit.

As before, let $D$ be a simple divisor, and $\vor = \left( \nabla, \Phi \right)$ be th corresponding $\tau$-vortex field.  For each $p \in D$, choose $\Theta_p = \{ \theta_{p,q} \}_{q \in D} \in \K_D$, so that $|\theta_{p,q}| = \delta_{p,q}$.  Let $\sigma_p = \sigma_{\Theta_p}$ be the corresponding section defined by \cref{eq:sigmatheta}, and let $X_{\vor, \Theta_p} = \left( a_p, \psi_p \right)$, as defined in \Cref{thm:tan}.  By \cref{eq:apt2},
\begin{equation}
X_{\vor, \Theta_p} = Y_{\vor, \Theta_p} - \L_\vor^* \left( Z_{\vor, \Theta_p} \right).  \label{eq:apt3}
\end{equation}
where $Z_{\vor, \Theta_p} = \left( f_p, \xi_p \right) \in \Omega^0 \oplus \Omega^{0,1}_L$.  It is easy to see that in Statement~(2) of \Cref{thm:tan} we can now replace $\dist_D$ with $\dist_p$, the distance from the single point $p$.

The set $\{ X_p \}_{p \in D}$, where $X_p = \Pi_* \left( X_{\vor, \Theta_p} \right) \in T_{\left[ \vor \right]} \M_\tau$, is an asymptotically orthonormal basis for the horizontal subspace at $\vor$, in the sense that as $\tau \rightarrow \infty$
\begin{equation}
\left\langle X_p \middle| X_q  \right\rangle = \delta_{p,q} + O \left( \exp \left(- \frac{\sqrt{\tau} \rho_D}{c} \right) \right) \rightarrow \delta_{p,q}.
\end{equation}
Finally, for each tangent vector $X$, let $X^\flat = \left\langle X \middle| - \right\rangle$ be the metric-dual covector.

\begin{thm} {\rm [The asymptotic Berry curvature]} \label{thm:cur}
There is a positive number $c = c \left( \Sigma, \omega, J, L, h \right)$ such that if $\tau~>~\tau_0~=~\tfrac{4 \pi d}{\Ar \left( \Sigma \right)}$ and $\left[ \vor \right]$ is a simple $\tau$-vortex, then the Berry curvature satisfies
\begin{align}
\Omega_{\left[ \vor \right]} &= \sum\limits_{p, q \in D} \left( \left( \chi_p  \delta_{p,q} \frac{i w}{\pi \tau} + i A_\tau^{p,q} \right) X^\flat_p \wedge \left( i X_q \right)^\flat + i B_\tau^{p,q}  X^\flat_p \wedge X^\flat_q + i C_\tau^{p,q} \left( i X_p \right)^\flat \wedge \left( i X_q \right)^\flat \right),  \label{eq:cb}
\end{align}
where $\chi_p$ as defined in \cref{eq:chip}, and $A^{p,q}_\tau, B^{p,q}_\tau$, and $C^{p,q}_\tau$ are real functions, with
\begin{equation}
\left| A^{p,q}_\tau \right| + \left| B^{p,q}_\tau \right| + \left| C^{p,q}_\tau \right| = O \left( \tau^{-1} \exp \left( - \frac{\sqrt{\tau} \rho_D}{c} \right) \right). \label{eq:coeffbound}
\end{equation}
\end{thm}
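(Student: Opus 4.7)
The plan is to compute the matrix of $\Omega_{[\vor]}$ in the asymptotically orthonormal basis $\{X_p, iX_p\}_{p \in D}$ via \Cref{thm:ber}, then approximate each entry using \Cref{thm:tan}. The hermitian symmetry of $h$ together with the fact that multiplication by $i$ on a tangent vector acts by multiplication by $i$ on the $\psi$-component yields
$\Omega_{[\vor]}(X_p, iX_q) = G_{[\phi]}(i \Re h(\psi_p, \psi_q))$ and $\Omega_{[\vor]}(X_p, X_q) = \Omega_{[\vor]}(iX_p, iX_q) = G_{[\phi]}(i \Im h(\psi_p, \psi_q))$; in particular $\Omega_{[\vor]}(X_p, X_p) = 0$, so no main term appears in the $X_p^\flat\wedge X_q^\flat$ or $(iX_p)^\flat\wedge(iX_q)^\flat$ channels, consistent with \cref{eq:cb}.

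Next I would replace each $\psi_p$ by the $L$-component $\mu_p := \tfrac{i}{\sqrt{2\pi\tau}}\Lambda(\sigma_p \wedge \partial_\nabla \phi)$ of $Y_{\vor,\Theta_p}$ (cf.\ \cref{eq:apt1}), using Statement~(2) of \Cref{thm:tan}; the remainder $\L_\vor^*(Z_{\vor,\Theta_p})$ is pointwise $O(\tau^{-1/2}\exp(-\sqrt\tau \dist_p/c))$. For $p \neq q$, the product $h(\mu_p, \mu_q)$ is proportional to $\chi_p\chi_q\, h_{\overline K}(\hat\theta_p, \hat\theta_q)\, |\partial_\nabla\phi|^2/\tau$ and is supported in $\supp(\chi_p)\cap\supp(\chi_q)$, which by the choice of $\rho_D$ in \cref{eq:rhod} lies in the transition annuli where the bump functions cut each other off. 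Combining the decay bound \cref{ineq:bnt2} with the Green's function estimates \cref{ineq:green1,ineq:green2} via a Laplace-method argument (as in \cite[Lemma~3.3]{T99}) then yields a uniform pointwise bound $|G_{[\phi]}(h(\mu_p,\mu_q))| = O(\tau^{-1}\exp(-\sqrt\tau \rho_D/c))$, absorbable into $A^{p,q}_\tau, B^{p,q}_\tau, C^{p,q}_\tau$.

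The heart of the argument is the diagonal case $p = q$: since $|\mu_p|^2 = \tfrac{|\partial_\nabla\phi|^2 |\sigma_p|^2}{2\pi\tau}$, I need to verify $G_{[\phi]}(i|\mu_p|^2) = \chi_p \tfrac{iw}{\pi\tau}$ modulo exponentially small terms. The key computation applies $H_{[\phi]} = \tfrac12\Delta + \tfrac12 |\phi|^2$ to $\chi_p w/(\pi\tau)$ and invokes Bradlow's identity from \cref{eq:bra}:
\begin{equation*}
H_{[\phi]}\bigl(\chi_p \tfrac{w}{\pi\tau}\bigr) = \chi_p \tfrac{|\partial_\nabla\phi|^2}{2\pi\tau} + [H_{[\phi]},\chi_p]\bigl(\tfrac{w}{\pi\tau}\bigr).
\end{equation*}
The commutator is supported in the annulus $\rho_D/2 \leq \dist_p \leq \rho_D$, where $w$ and $\nabla w$ are exponentially small by \cref{ineq:bnt2}. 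Choosing $\hat\theta_p$ to be parallel along radial geodesics makes $|\hat\theta_p|\equiv 1$, so $|\sigma_p|^2 = \chi_p^2$, and the remaining pointwise discrepancy $\chi_p(1-\chi_p)|\partial_\nabla\phi|^2$ is again supported in the same annulus and hence exponentially small. Inverting by $G_{[\phi]}$ and using \cref{ineq:green1,ineq:green2} gives $\Omega_{[\vor]}(X_p,iX_p) = \chi_p\tfrac{iw}{\pi\tau} + O(\tau^{-1}\exp(-\sqrt\tau \rho_D/c))$.

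The main obstacle is the simultaneous control of the three exponentially small error sources -- the horizontal correction $\L_\vor^*(Z_{\vor,\Theta_p})$ from \cref{eq:apt2}, the commutator $[H_{[\phi]},\chi_p]$, and the cutoff discrepancy $|\sigma_p|^2 - \chi_p$ -- together with the non-orthonormality corrections incurred by expanding $\Omega_{[\vor]}$ in the metric-dual basis $\{X_p^\flat,(iX_p)^\flat\}$ rather than the algebraic dual basis. Each must compose with $G_{[\phi]}$ via \cref{ineq:green1,ineq:green2} and produce uniform contributions at most $O(\tau^{-1}\exp(-\sqrt\tau \rho_D/c))$, with constant $c$ depending only on $(\Sigma,\omega,J,L,h)$, yielding the claimed form \cref{eq:cb} and the bound \cref{eq:coeffbound}.
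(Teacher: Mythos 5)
Your proposal follows essentially the same route as the paper's proof: off-diagonal terms are killed by the pointwise exponential decay of $\psi_p\overline{\psi_q}$ together with $G_{\scriptscriptstyle\left[\phi\right]}\left(1\right)=O\left(\tau^{-1}\right)$, and the diagonal term is identified by replacing $\psi_p$ with the $\Omega^0_L$-component of $Y_{\scriptscriptstyle\vor,\Theta_p}$ via Statement~(2) of \Cref{thm:tan} and then using Bradlow's identity \cref{eq:bra} to write $\chi_p\left|\partial_{\scriptscriptstyle\nabla}\phi\right|^2/\left(2\pi\tau\right)$ as $H_{\scriptscriptstyle\left[\phi\right]}\left(\chi_p w/\left(\pi\tau\right)\right)$ modulo exponentially small commutator and cutoff errors. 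Your handling of $\left|\sigma_p\right|^2-\chi_p$ (via a radially parallel extension making $\left|\sigma_p\right|^2=\chi_p^2$) and your explicit accounting for the non-orthonormality of the dual basis are minor refinements of the same argument, not a different method.
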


\begin{proof}
For $p \neq q$, \Cref{thm:bnt,thm:tan} imply that
\begin{equation}
\left| h \left( \psi_p, \psi_q \right) \right| = |\psi_p| |\psi_q| \leqslant c \exp \left( - \frac{\sqrt{\tau} \left( \dist_p + \dist_q \right)}{c} \right) \leqslant c \exp \left( - \frac{\sqrt{\tau} \rho_D}{c} \right).
\end{equation}
This inequality together with the fact that $G_{\left[ \Phi \right]} \left( 1 \right) = O \left( \tau^{-1} \right)$ from \cref{ineq:green1}, gives \cref{eq:coeffbound} in this case.

In general, for every $p \in D$, \Cref{thm:ber} shows that
\begin{equation}
\frac{1}{i} \Omega_{\left[ \vor \right]} \left( X_p, i X_p \right) = G_{\left[ \Phi \right]} \left( \left| \psi_p \right|^2 \right). \label{eq:omegap}
\end{equation}
This is non-negative, because $G_{\left[ \Phi \right]}$ is given by convolutions with the positive Green's function.  By \cref{eq:apt3}, we can write
\begin{equation}
\psi_p = \frac{1}{\sqrt{2 \pi \tau}} i \Lambda \left( \sigma_p \wedge \partial_\nabla \Phi \right) - \sqrt{2} \: \del^*_\nabla \xi_p + f_p \Phi. \label{eq:psip}
\end{equation}

Applying the bounds in \Cref{thm:bnt,thm:tan} to \cref{eq:omegap,eq:psip}, we obtain
\begin{equation}
\frac{1}{i} \Omega_{\left[ \vor \right]} \left( X_p, i X_p \right) = \frac{1}{2 \pi \tau} G_{\left[ \Phi \right]} \left( \left| \sigma_p \right|^2 \left| \partial_\nabla \Phi \right|^2 \right) + G_{\left[ \Phi \right]} \left( O \left( \exp \left( - \frac{\sqrt{\tau} \dist_p}{c} \right) \right) \right).  \label{eq:cur1}
\end{equation}

By \cref{ineq:green1}, and the positivity of the Green's function the last term is
\begin{equation}
G_{\left[ \Phi \right]} \left( O \left( \exp \left( - \frac{\sqrt{\tau} \dist_p}{c} \right) \right) \right) = O \left( \tau^{-1} \exp \left( - \frac{\sqrt{\tau} \dist_p}{c} \right) \right).  \label{eq:cur3}
\end{equation}

\Cref{thm:bnt} and \cref{eq:bra} gives us
\begin{equation}
H_{\left[ \Phi \right]} \left( \chi_p w \right) = \frac{1}{2} \chi_p \left| \partial_\nabla \Phi \right|^2 + O \left( \tau \exp \left( - \frac{\sqrt{\tau} \dist_p}{c} \right) \right).  \label{eq:chiw}
\end{equation}
Thus we can write the main term in \cref{eq:cur1} as
\begin{align}
G_{\left[ \Phi \right]} \left( \frac{\left| \sigma_p \right|^2 \left| \partial_\nabla \Phi \right|^2}{2 \pi \tau} \right) &= G_{\left[ \Phi \right]} \left( \frac{\chi_p \left| \partial_\nabla \Phi \right|^2}{2 \pi \tau} \right) + O \left( G_{\left[ \Phi \right]} \left( \tau \dist_p^2 \exp \left( - \frac{\sqrt{\tau} dist_p}{c} \right) \right) \right)  \\
&= \chi_p \frac{w}{\pi \tau} + O \left( \tau^{-1} \exp \left( - \frac{\sqrt{\tau} \dist_p}{c} \right) \right), \label{eq:cur2}
\end{align}
since $\left| \sigma_p \right|^2 - \chi_p = O \left( \dist_p^2 \right)$ by \eqref{eq:sigmabound}.  Combining \cref{eq:cur1,eq:cur2,eq:cur3} yields
\begin{equation}
\frac{1}{i} \Omega_{\left[ \vor \right]} \left( X_p, i X_p \right) = \chi_p \frac{w}{\pi \tau} + O \left( \tau^{-1} \exp \left( - \frac{\sqrt{\tau} \dist_p}{c} \right) \right).
\end{equation}
This completes the proof of \cref{eq:cb,eq:coeffbound}.
\end{proof}

\smallskip

\section{The Asymptotic Berry Holonomy}
\label{sec:hol}

A connection on a principal $G$-bundle $P \rightarrow X$ defines the notion of parallel transport; cf. \cite{KN69}*{Chapter~II}.  Parallel transport around a loop is called holonomy.  Holonomy can be viewed as a map from the loop space of $X$ to the space of conjugacy classes of $G$.  For Abelian $G$, the later space is canonically isomorphic to $G$.

\smallskip

In our case, the $\tau$-vortex principal bundle, $\P_\tau \rightarrow \M_\tau$, is a principal $\G$-bundle equipped with the Berry connection.  The physical interpretation is that if one adiabatically moves the divisor points along a curve $\Gamma$ in $\Sym^d \left( \Sigma \right)$, then the corresponding $\tau$-vortex field evolves by the parallel transport defined the Berry connection (cf.  \cite{K50}, and \cite{B84}).  In particular, when $\Gamma$ is a loop, the holonomy of the Berry connection, called the {\em Berry holonomy}, is a gauge transformation.  In this section, we give analytic and topological descriptions of the gauge transformations that arise as Berry holonomies.

\smallskip

Since the Berry holonomy is a map from the loop space of $\tau$-vortex moduli space, we recall some well-known properties of loops in $\M_\tau \cong \Sym^d \left( \Sigma \right)$.

We call a loop $\Gamma$ in $\Sym^d \left( \Sigma \right)$ a {\em single vortex loop} if only one of the divisor points moves, and all other divisor points are fixed.  In other words, single vortex loops are induced by loops in $\Sigma$ that are based at one of the divisor points.  Every loop in $\Sym^d \left( \Sigma \right)$ can be decomposed up to homotopy (and thus homology) to a product of single vortex loops. Moreover,
\begin{equation}
H_1 \left( \M_\tau; \Z \right) \cong H_1 \left( \Sym^d \left( \Sigma \right); \Z \right) \cong H_1 \left( \Sigma; \Z \right),  \label{eq:hur}
\end{equation}
where the last isomorphism is given by sending single vortex loops to their homology classes by the Hurewicz homomorphism.

\smallskip

Recall that the complement of $\Sym^d_s \left( \Sigma \right)$ is called the big diagonal. A loop in $\Sym^d \left( \Sigma \right)$ is {\em regular} if it is a smooth, embedded (immersed, if $d = 1$) loop that does not intersect the big diagonal.  The big diagonal is empty when $d = 1$. When $d > 1$ the big diagonal is a subvariety of codimension at least 2, thus every smooth loop in $\Sym^d \left( \Sigma \right)$ can be made regular after a small smooth perturbation.  Now consider the canonical covering map $\Sigma^{\times d}_s \rightarrow \Sym^d_s \left( \Sigma \right)$, where $\Sigma^{\times d}_s$ is the space of ordered $d$-tuples in $\Sigma$ without repetition.  Given a regular loop $\Gamma$ that starts at the simple divisor $D = \Gamma \left( 0 \right) \in \Sym^d_s \left( \Sigma \right)$, each lift $\widetilde{D} \in \Sigma^{\times d}$ of $D$ determines a unique lift $\widetilde{\Gamma}$ of $\Gamma$.  The lift $\widetilde{\Gamma}$ can be regarded as a $d$-tuple $\left( \gamma_1, \ldots, \gamma_d \right)$ of curves (not necessarily loops) in $\Sigma$.  The {\em shadow of} $\Gamma$, $\rm{sh} \left( \Gamma \right)\subset\Sigma$ is
\begin{equation}
\rm{sh} \left( \Gamma \right) = \bigcup\ \img \left( \gamma_i \right), \label{eq:loops}
\end{equation}
where the union is over all non-constant $\gamma_i$.  The set $\rm{sh} \left( \Gamma \right)$ has a natural orientation coming from the orientation of $\Gamma$.  Since $\Gamma$ is regular, $\rm{sh} \left( \Gamma \right)$ is a union of immersed, oriented loops, hence it is an integer 1-cycle in $\Sigma$.  The homology class in $H_1 \left( \Sigma; \Z \right)$ represented by $\rm{sh} \left( \Gamma \right)$ is independent of the choice of the lift $\widetilde{D}$.  We denote this class by $\left[ \Gamma \right]$.  It is easy to check the homotopy class of $\Gamma$ in $\Sym^d \left( \Sigma \right)$ is sent to the homology class $\left[ \Gamma \right]$ by the \hyperref[eq:hur]{isomorphism \eqref{eq:hur}}.  In general, for a single vortex loop, only one of the $\gamma_i$'s is not constant, say $\gamma$, and $\left[ \Gamma \right] = \left[ \gamma \right] \in H_1 \left( \Sigma; \Z \right)$.

\smallskip

\begin{example} An example of $\rm{sh} \left( \Gamma \right)$ is seen on \Cref{fig:1loop}, where
\begin{equation}
\left[ \Gamma \right] = [\gamma_1] + [\gamma_2] + [\gamma_3] = [\gamma_3],
\end{equation}
since both $\gamma_1$ and $\gamma_2$ are null-homologous. Thus $\Gamma$ is homologous to a single vortex loop.

We call a loop a {\em (positively oriented) vortex interchange} if, as in \Cref{fig:vorint}, only two $\gamma_i$'s, say  $\gamma_1$ and $\gamma_2$, are not constant,  and the composition $\Gamma = \gamma_1 * \gamma_2$ is the (oriented) boundary of a disk.

\smallskip

\begin{minipage}{.48\textwidth}
 \centering
  \labellist
   \small\hair 2pt
   \pinlabel $\gamma_1$ at 40 29
   \pinlabel $\gamma_2$ at 90 45
   \pinlabel $\gamma_3$ at 145 33
  \endlabellist
 \includegraphics[scale=.84]{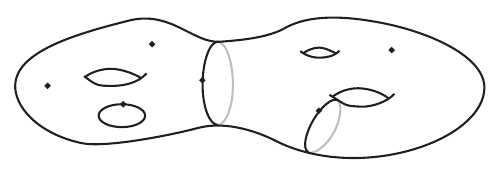}
  \captionof{figure}{Single vortex loops: One divisor point moves along one of the $\gamma_i$'s.  All other divisor points are fixed.  \label{fig:1loop}}
\end{minipage}
\begin{minipage}{.48\textwidth}
 \centering
  \labellist
   \small\hair 2pt
   \pinlabel $\gamma_1$ at 95 52
   \pinlabel $\gamma_2$ at 95 13
  \endlabellist
 \includegraphics[scale=.84]{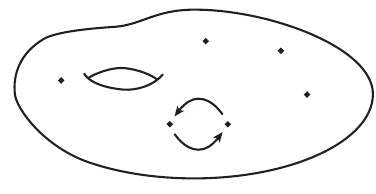}
  \captionof{figure}{Vortex interchange: One divisor point moves along $\gamma_1$ and another divisor point moves along $\gamma_2$.  All other divisor points are fixed.  \label{fig:vorint}}
\end{minipage}

\end{example}

\smallskip

Since the Berry holonomy has values in the gauge group $\G$, we also recall a couple well-known properties of gauge transformations.  Elements $g \in \G$ represent classes in $H^1 \left( \Sigma; \Z \right)$ as follows: For a closed manifold $X$ and a finitely generated Abelian group $G$, $H^n \left( \Sigma; G \right)$ is canonically isomorphic to the space $\left[ X, K \left( G, n \right) \right]$ of homotopy classes of continuous maps from $X$ to the Eilenberg-MacLane space $K \left( G, n \right)$ (cf.  \cite{H02}*{Theorem~4.57}).  Since $K \left( \Z, 1 \right) \cong \U \left( 1 \right)$ and $\G$ is homotopy equivalent to $\left[ \Sigma, \U \left( 1 \right) \right]$, we get that $H^1 \left( \Sigma; \Z \right)$ is canonically isomorphic to $\pi_0 \left( \G \right)$, which is also a group, because $\G$ is.  In fact, if $\G_0$ is the identity component of $\G$, then $\pi_0 \left( \G \right) \simeq \G / \G_0$, and the short exact sequence
\begin{equation}
\{ 0 \} \rightarrow \G_0 \hookrightarrow \G \twoheadrightarrow H^1 \left( \Sigma; \Z \right) \rightarrow \{ 0 \} \label{eq:ses}
\end{equation}
is non-canonically split.

The isomorphism between $\pi_0 \left( \G \right)$ and $H^1 \left( \Sigma; \Z \right)$ can be understood on the (co)cycle level;  since $\Sigma$ is a closed, oriented surface, $H^1 \left( \Sigma; \Z \right)$ is canonically isomorphic to $\textnormal{Hom} \left( H_1 \left( \Sigma; \Z \right), \Z \right)$.  An element $g \in \G$ defines an element $[g] \in \textnormal{Hom} \left( H_1 \left( \Sigma; \Z \right), \Z \right)$ via
\begin{equation}
[g] \left( [\gamma] \right) = g \left( \gamma \right) = \frac{1}{2 \pi i} \int\limits_\gamma g^{-1} dg \in \Z.  \label{eq:deg1}
\end{equation}

The Berry holonomy can be viewed as a map from the loop space $\Omega \M_\tau$ of $\M_\tau$ to $\G$.  It then induces a map $\hol_*$ on the connected components:
\begin{equation}
\pi_1 \left( \M_\tau \right) \xrightarrow{\sim} \pi_0 \left( \Omega \M_\tau \right) \xrightarrow{\hol_*} \pi_0 \left( \G \right) \xrightarrow{\sim} H^1 \left( \Sigma; \Z \right).
\end{equation}
Since cohomology groups are Abelian, the above map factors down to the homology, and thus defines a homomorphism:
\begin{equation}
\hol_\star: H_1 \left( \Sigma; \Z \right) \xrightarrow{\sim} H_1 \left( \M_\tau; \Z \right) \xrightarrow{\hol_*} H^1 \left( \Sigma; \Z \right),  \label{eq:hol}
\end{equation}
where the first isomorphism is from \eqref{eq:hur}.  Using \cref{eq:deg1}, an explicit formula for $\hol_\star$ can be given as follows: if $g = \hol \left( \Gamma \right)$, then $\hol_\star \left( \left[ \Gamma \right] \right)$ evaluates on any 1-cycle $\gamma$ by
\begin{equation}
\hol_\star \left( \left[ \Gamma \right] \right) \left( \left[ \gamma \right] \right) = \frac{1}{2 \pi i} \int\limits_\gamma g^{-1} dg.  \label{eq:deg2}
\end{equation}

\smallskip

Finally, recall that a {\em $k$-current} is a continuous linear functional on $\Omega^k$.  A 1-form $a \in \Omega^1$ defines a 1-current by
\begin{equation}
C_a \left( b \right) = \int\limits_\Sigma a \wedge b.  \label{eq:curr1}
\end{equation}
Similarly, a smooth 1-chain $\gamma$ defines a 1-current by
\begin{equation}
C_\gamma \left( b \right) = \int\limits_\gamma b.  \label{eq:curr2}
\end{equation}
We say that the 1-currents in \cref{eq:curr1,eq:curr2} are the {\em 1-currents defined by $a$ and $\gamma$} respectively.

\smallskip

Now we are ready to prove our main theorem about the Berry holonomy, stated in the introduction.

\begin{proof}[The proof of the \hyperlink{main:hol}{Main Theorem}] Since every smooth path can be made regular by an arbitrarily small smooth perturbation, it is enough to check regular loops, $\Gamma$.

We prove Statement~(1) first:  Let $\vor$ be a $\tau$-vortex field corresponding to $D$, and $\widehat{\Gamma}$ be the horizontal lift of $\Gamma$ starting at $\vor \in \P_\tau$.  Since $\Gamma$ is regular, $\Gamma \left( t \right)$ is simple for all $t$, thus we can apply \Cref{thm:tan} to the velocity vector $\widehat{\Gamma}'(t) = \left( a_t, \psi_t \right)$, which is horizontal, and hence obtain
\begin{equation}
\left| a_t \right| + \left| \psi_t \right| \leqslant c \tau \sum\limits_{i = 1}^d \left| \gamma_i' \left( t \right) \right| \exp \left(- \frac{\sqrt{\tau} \dist_{\gamma_i \left( t \right)}}{c} \right).  \label{eq:tanbound}
\end{equation}
By definition of the holonomy,
\begin{equation}
g_\tau \left( \vor \right) = \vor + \int\limits_0^1 \widehat{\Gamma}' \left( t \right) dt = \left( \nabla + \int\limits_0^1 a_t dt, \ \Phi + \int\limits_0^1 \psi_t dt \right).
\end{equation}
On the other hand, by the definition of the gauge action (for Abelian groups),
\begin{equation}
g_\tau \left( \vor \right) = \left( \nabla + g_\tau dg_\tau^{-1}, \ g_\tau \Phi \right).
\end{equation}
Thus we have
\begin{equation}
g_\tau dg_\tau^{-1} = \int\limits_0^1 a_t \ dt,  \label{eq:dg}
\end{equation}
and
\begin{equation}
\left( g_\tau - 1 \right) \Phi = \int\limits_0^1 \psi_t \ dt.  \label{eq:gminus1}
\end{equation}
Let $V \subset \Sigma$ be any compact set in the complement of $\rm{sh} \left( \Gamma \right)$. Since $\dist \left( \rm{sh} \left( \Gamma \right), V \right) > 0$, \Cref{thm:bnt} shows that $\left| \Phi \right| \geqslant \tfrac{\sqrt{\tau}}{2}$ on $V$ for all large $\tau$.  Hence \cref{eq:gminus1,eq:tanbound} imply that for $x \in V$
\begin{equation}
\left| \Phi \right| \left| g_\tau - 1 \right|_x \leqslant \frac{\sqrt{\tau}}{2} \left| g_\tau - 1 \right|_x \leqslant \int\limits_0^1 \left| \psi_t \left( x \right) \right| dt \leqslant c \tau \sum\limits_{i = 1}^d \int\limits_0^1 \left| \gamma_i '\left( t \right) \right| \exp \left( - \frac{\sqrt{\tau} \dist(\gamma_i \left( t \right), x)}{c} \right) dt.  \label[ineq]{ineq:c0}
\end{equation}
Using $\left| g_\tau \right| = 1$, $\left| dg_\tau^{-1} \right| = \left| dg_\tau \right|$, and \cref{eq:dg,eq:tanbound}, we also obtain
\begin{equation}
|dg_\tau|_x \leqslant \sum\limits_{i = 1}^d \int\limits_0^1 \left| a_t \left( x \right) \right| dt \leqslant c \tau \sum\limits_{i = 1}^d \int\limits_0^1 \left| \gamma_i '\left( t \right) \right| \exp \left( - \frac{\sqrt{\tau} \dist( \gamma_i \left( t \right),x)}{c} \right) dt.  \label[ineq]{ineq:c1}
\end{equation}
Combining the last two inequalities gives
\begin{equation}
\sqrt{\tau} \left| g_\tau - 1 \right|_x + |dg_\tau|_x \leqslant c' \tau \exp \left( - \frac{\sqrt{\tau} \dist \left( \rm{sh} \left( \Gamma \right), V \right)}{c} \right),  \label[ineq]{ineq:gtaubound}
\end{equation}
for all large $\tau$, which implies that $|g_\tau - 1|$ and $|dg_\tau|$ converge to 0, uniformly on $V$, as $\tau \rightarrow \infty$. This proves Statement~(1).

\smallskip

In order to prove Statement~(2), we first assume that $\Gamma$ is a single vortex loop, for which $\rm{sh} \left( \Gamma \right)$ is an embedded loop, that bounds an embedded disk $B$ in $\Sigma$, and $D = \Gamma\left( 0 \right) = \Gamma \left( 1 \right)$ has no divisor points in the interior $B$, and let $p$ be the divisor point in $D$ that is moved by $\Gamma$.  There is a canonical embedding of $B$ into $\M_\tau$ that sends a point $x \in B$ to the divisor $x + \left( D - p \right)$.  The image of this map, $\widehat{B}$, is an (oriented) disk in $\M_\tau$, whose (oriented) boundary is $\Gamma$.  We will denote this embedding by $\pi_B : B \rightarrow \widehat{B}$.

Since $\Gamma$ is null-homotopic, $g_\tau$ is in the identity component of $\G$, and so can be written as $g_\tau = \exp \left( 2 \pi i f_\tau \right)$, where $f_\tau$ is a smooth, real function on $\Sigma$.  By Stokes' Theorem,
\begin{equation}
f_\tau = \frac{1}{2 \pi i} \int\limits_{\widehat{B}} \Omega. \label{eq:stokes}
\end{equation}
If $j$ is a path as in Statement~(2) of the Main Theorem, then $\varphi_\tau = f_\tau \circ j$.  Using \cref{eq:stokes} we see that
\begin{equation}
\varphi_\tau \left( 1 \right) - \varphi_\tau \left( 0 \right) = f_\tau \left( j \left( 1 \right) \right) - f_\tau \left( j \left( 0 \right) \right) = \frac{1}{2 \pi i} \int\limits_{\widehat{B}} \left( \Omega \left( j \left( 1 \right) \right) - \Omega \left( j \left( 0 \right) \right) \right),
\end{equation}
where $j\left( 1 \right)$ is in $B$, and $j \left( 0 \right)$ is not.  To evaluate this integral, we reparametrize using $\pi_B$.  By \Cref{thm:tan}, if $\omega_\tau$ is the pullback of the K\"ahler class of $\M_\tau$ from $\widehat{B}$ to $B$ using $\pi_B$, then
\begin{equation}
\omega_\tau = \pi \tau \omega + O \left( 1 \right).
\end{equation}
For each $x \in B$, let $w_x$ be the function $w$, defined in \cref{eq:wdef}, corresponding to the divisor $D = \pi_B \left( x \right)$, and let $d_x = \dist \left( x, \{ j \left( 0 \right),j \left( 1 \right)\} \right)$.  By \cite{HJS96}*{Lemma~1.1}, $w_x|_B$ converges to $2 \pi \delta_x$, in measure, since $D \cap B = \{ x \}$. Now using \Cref{thm:cur}, the last integral above equals to
\begin{align}
\varphi_\tau \left( 1 \right) - \varphi_\tau \left( 0 \right) &= \frac{1}{2 \pi} \int\limits_B \left( \frac{w_x \left( j \left( 1 \right) \right) - w_x \left( j \left( 0 \right) \right)}{\pi \tau} + O \left( \tau^{-1} \exp \left( - \frac{2 \sqrt{\tau} d_x}{c} \right) \right) \right) \left( \pi \tau + O \left( 1 \right) \right) \omega_x \\
&= \int\limits_B \left( \delta \left( x, j \left( 1 \right) \right) + O \left( \exp \left( - \frac{\sqrt{\tau} d_x}{c} \right) \right) \right) \omega_x + O \left( \tau^{-1} \right) \\
&= 1 + O \left( \tau^{-1} \right).
\end{align}
This implies Statement~(2) in the case where $\Gamma$ is a simple vortex loop that bounds a disk.

\smallskip

In the general case, let $I \left( 2 \ep \right)$ be the tubular $2 \ep$-neighborhood of $I = \img \left( j \right)$, where $\ep$ is small enough so that $I \left( 2 \ep \right) \cap \rm{sh} \left( \Gamma \right)$ is a single embedded arc.  Let $\Gamma^\circ$ be a single, embedded bounding loop in $\Sigma$, as in the previous case, for which $\rm{sh} \left( \Gamma \right)$ and $\rm{sh} \left( \Gamma^\circ \right)$ coincide on $I \left( 2 \ep \right)$.  Let $\Gamma_{\rm out}$ and $\Gamma^\circ_{\rm out}$ denote the parts of $\Gamma$ and $\Gamma^\circ$, respectively, for which $\rm{sh} \left( \Gamma_{\rm out} \right)$ and $\rm{sh} \left( \Gamma^\circ_{\rm out} \right)$ lie in the complement of $I \left( 2 \ep \right)$.  Similarly $\Gamma_{\rm in} = \Gamma^\circ_{\rm in}$ is their common part.  Now one can join $\Gamma^\circ_{\rm out}$ with the reverse of $\Gamma_{\rm out}$ to get a piecewise smooth loop $\Gamma^{\rm new}$, such that $\rm{sh} \left( \Gamma^{\rm new} \right)$ is disjoint from $I \left( 2 \ep \right)$, and furthermore, the loop sum $\Gamma * \Gamma^{\rm new}$ and $\Gamma^\circ$ differ only by $\Gamma_{\rm out}$ and its reverse,  thus
\begin{equation}
g_\tau \hol_{\Gamma^{\rm new}} = \hol_{\Gamma^\circ}.
\end{equation}
On $I \left( \ep \right)$ we have that $\hol_{\Gamma^{\rm new}}$ converges to 1 in the $C^1$-topology as $\tau \rightarrow \infty$.  Since the crossing formula holds for $\Gamma^\circ$, it must hold for $\Gamma$ as well.  This establishes the general case of Statement~(2).

\smallskip

In order to prove Statement~(3) we pick a finite cover $\mathcal{U}$ of $\Sigma$ by coordinate charts such that for every $\left( U, \Psi \right) \in \mathcal{U}$ the preimage of the intersection $\Psi^{-1} \left( \rm{sh} \left( \Gamma \right) \cap U \right) \subset \rl^2$ is either (i) empty, (ii) the $x$-axis, or (iii) the union of the two axes.  By using a subordinate partition of unity, it is enough to prove Statement~(3) for 1-forms that are supported inside one of these charts.  Fix one such chart $\left( U, \Psi \right) \in \mathcal{U}$, and let $b \in \Omega^1$ a 1-form with $\supp \left( b \right) \subset U$. Let us also write
\begin{subequations}
\begin{align}
\Psi^*a_\tau &= df_\tau  \label{eq:psistaratau}   \\
\Psi^* b &= A dx + B dy,  \label{eq:psistarb}
\end{align}
\end{subequations}
where $A$ and $B$ are compactly supported functions on $\rl^2$.

\smallskip

If $\Psi^{-1} \left( \rm{sh} \left( \Gamma \right) \cap U \right)$ is empty, then the support of $b$ and $\rm{sh} \left( \Gamma \right)$ are disjoint, hence
\begin{equation}
C_{\rm{sh} \left( \Gamma \right)} \left( b \right) = \int\limits_{\rm{sh} \left( \Gamma \right)} b = \int\limits_{\rm{sh} \left( \Gamma \right) \cap \supp \left( b \right)} b = \int\limits_\emptyset b = 0.
\end{equation}
On the other hand, as $\tau \rightarrow \infty$, $a_\tau \rightarrow 0$ on $\supp \left( b \right)$ by Statement~(1), and hence
\begin{equation}
\left| C_{a_\tau} \left( b \right) \right| \leqslant \max\limits_{\supp \left( b \right)} \{ \left| a_\tau \right| \} \|b\|_{L^1} \rightarrow 0,
\end{equation}
which proves Statement~(3) in the case (i).

\smallskip

Next, if $\Psi^{-1} \left( \rm{sh} \left( \Gamma \right) \cap U \right)$ is the $x$-axis, then by \cref{eq:psistarb},
\begin{equation}
C_{\rm{sh} \left( \Gamma \right)} \left( b \right) = \int\limits_{\Psi^{-1} \left( \rm{sh} \left( \Gamma \right) \cap U \right)} \Psi^* b = \int\limits_{- \infty}^\infty A \left( x, 0 \right) dx. \label{eq:cgammab1}
\end{equation}
We also have
\begin{equation}
C_{a_\tau} \left( b \right) = \int\limits_\Sigma a_\tau \wedge b = \int\limits_U a_\tau \wedge b = \int\limits_{\rl^2} df_\tau \wedge \Psi^* b. \label{eq:catau1}
\end{equation}
Again using \cref{eq:psistarb}, this becomes
\begin{equation}
C_{a_\tau} \left( b \right) = \int\limits_{\rl^2} df_\tau \wedge \left( A dx + B dy \right) = \int\limits_{\rl^2} d \left( f_\tau A dx + f_\tau B dy \right) - \int\limits_{\rl^2} f_\tau \left( \frac{\partial B}{\partial x} - \frac{\partial A}{\partial y} \right) dx \wedge dy.
\end{equation}
The first integral on the right-hand side is zero by Stokes' Theorem and the fact that $A$ and $B$ are compactly supported. By Statement~(2), we choose $f_\tau$ so that it converges to 0 when $y$ is positive and to 1 when $y$ is negative.  As $\tau \rightarrow \infty$, we then have
\begin{align}
C_{a_\tau} \left( b \right) &= \int\limits_{\rl^2} f_\tau \left( \frac{\partial A}{\partial y} - \frac{\partial B}{\partial x} \right) dx \wedge dy \\
&\rightarrow \int\limits_{- \infty}^\infty \int\limits_{- \infty}^0 \left( \frac{\partial A}{\partial y} \left( x, y \right) - \frac{\partial B}{\partial x} \left( x, y \right) \right) dy dx \\
&= \int\limits_{- \infty}^\infty A \left( x, 0 \right) dx.
\end{align}
Together with \cref{eq:cgammab1} this proves Statement~(3) for case (ii).

\smallskip

Finally, if $\Psi^{-1} \left( \rm{sh} \left( \Gamma \right) \cap U \right)$ is the union of the two axes, then again by \cref{eq:psistarb},
\begin{equation}
C_{\rm{sh} \left( \Gamma \right)} \left( b \right) = \int\limits_{\Psi^{-1} \left( \rm{sh} \left( \Gamma \right) \cap U \right)} \Psi^* b = \int\limits_{- \infty}^\infty A \left( x, 0 \right) dx + \int\limits_{- \infty}^\infty B \left( 0, y \right) dy. \label{eq:cgammab2}
\end{equation}
As before, we also have
\begin{equation}
C_{a_\tau} \left( b \right) = \int\limits_{\rl^2} f_\tau \left( \frac{\partial A}{\partial y} - \frac{\partial B}{\partial x} \right) dx \wedge dy. \label{eq:catau2}
\end{equation}
But now Statement~(2) shows that, as $\tau \rightarrow \infty$, $f_\tau$ converges to 0 in the upper left quadrant, to 1 in the upper right and the lower left quadrants, and to 2 in the lower right quadrant.  Hence by \cref{eq:catau2},
\begin{align}
C_{a_\tau} \left( b \right) \rightarrow 2 &\int\limits_{\rl_+} \int\limits_{\rl_-} \left( \frac{\partial A}{\partial y} \left( x, y \right) - \frac{\partial B}{\partial x} \left( x, y \right) \right) dy dx \\
+ &\int\limits_{\rl_+} \int\limits_{\rl_+} \left( \frac{\partial A}{\partial y} \left( x, y \right) - \frac{\partial B}{\partial x} \left( x, y \right) \right) dy dx \\
+ &\int\limits_{\rl_-} \int\limits_{\rl_-} \left( \frac{\partial A}{\partial y} \left( x, y \right) - \frac{\partial B}{\partial x} \left( x, y \right) \right) dy dx \\
= &\int\limits_{- \infty}^\infty A \left( x, 0 \right) dx + \int\limits_{- \infty}^\infty B \left( 0, y \right) dy,
\end{align}
where the last step is an elementary computation.  Together with \eqref{eq:cgammab2}, this proves Statement~(3) for the case (iii).

\smallskip

To prove Statement~(4), according to \cref{eq:deg2} and the definition of the Poincar\'e duality, we need to show that for any $\left[ \gamma \right] \in H_1 \left( \Sigma; \Z \right)$,
\begin{equation}
\hol_\star \left( \left[ \Gamma \right] \right) \left( \left[ \gamma \right] \right) = \frac{1}{2 \pi i} \int\limits_\gamma g_\tau^{-1} dg_\tau = \left[ \Gamma \right] \cdot \left[ \gamma \right].  \label{eq:homint}
\end{equation}
Since everything in \eqref{eq:homint} is homotopy invariant, we have the freedom to change $\Gamma$ by a homotopy.  Recall from \eqref{eq:hur} that $\Gamma$ can be decomposed, up to homotopy, to a product of single vortex loops; thus it suffices to check Statement~(4) for a basis of $\pi_1 \left( \Sym^d \left( \Sigma \right) \right) \cong H_1 \left( \Sigma; \Z \right)$.  We use a ``symplectic'' basis: a set of simple closed curves $\{ \alpha_i, \beta_i \}_{1 \leqslant i \leqslant \textnormal{g} \left( \Sigma \right)}$ such that $\alpha_i$ intersects $\beta_i$ transversally and positively once, and $\alpha_i \cap \beta_j = \alpha_i \cap \alpha_j = \beta_i \cap \beta_j = \emptyset$ for $i \neq j$.  There is always such a set, and $\{ [\alpha_i], [\beta_i] \}_{1 \leqslant i \leqslant \textnormal{g} \left( \Sigma \right)}$ is a basis of $H_1 \left( \Sigma; \Z \right)$, with
\begin{equation}
[\alpha_i] \cdot [\alpha_j] = 0, \qquad [\beta_i] \cdot [\beta_j] = 0, \qquad [\alpha_i] \cdot [\beta_j] = \delta_{i,j},
\end{equation}
where $\cdot$ is the homology intersection.  Denote the corresponding single vortex loops in $\Sym^d \left( \Sigma \right)$ by $\{ \widehat{\alpha}_i , \widehat{\beta}_i \}_{1 \leqslant i \leqslant \textnormal{g} \left( \Sigma \right)}$.
 
To prove Statement~(4), we need only to verify \eqref{eq:homint} for every pair in the basis.  When $\gamma \in \{ \alpha_i, \beta_i \}$, and $\Gamma \in \{ \widehat{\alpha}_j, \widehat{\beta}_j \}$ with $i \neq j$, we have by Statement~(1) that
\begin{equation}
\int\limits_\gamma i g_\tau^{-1} dg_\tau = O \left( \tau \exp \left( - \frac{\sqrt{\tau} \dist \left( \gamma, \rm{sh} \left( \Gamma \right) \right)}{c} \right) \right). \label{eq:pairing1}
\end{equation}
When $\gamma = \alpha_i$ and $\Gamma = \widehat{\alpha}_i$, for some $i$, we can chose another representative $\alpha'_i$ for $\left[ \gamma \right] = \left[ \alpha_i \right]$ that is disjoint from $\alpha_i$.  Thus, again by Statement~(1), we have
\begin{equation}
\int\limits_\gamma i g_\tau^{-1} dg_\tau = O \left( \tau \exp \left( - \frac{\sqrt{\tau} \dist \left( \alpha_i, \alpha'_i \right)}{c} \right) \right). \label{eq:pairing2}
\end{equation}
Thus all integrals in \eqref{eq:pairing1} and \eqref{eq:pairing2} converge to 0 as $\tau \rightarrow \infty$.  On the other hand, these integrals are integer multiples of $2 \pi$, so they had to be 0 for all $\tau > \tau_0$.

Finally, assume that $i = j$ and $\gamma = \alpha_i$ and $\rm{sh} \left( \Gamma \right) = \widehat{\beta}_i$, or $\gamma = \beta_i$ and $\rm{sh} \left( \Gamma \right) = \widehat{\alpha}_i$.  In order to prove the first of these cases, let us fix a small embedded segment $j$ on $\gamma = \beta_i$ that intersects $\rm{sh} \left( \Gamma \right) = \alpha_i$ once positively.  Such paths exist by the construction of the basis.  Write $g_\tau |_I = \exp \left( 2 \pi i \varphi_\tau \right)$, and so $g_\tau^{-1} dg_\tau |_{I} = 2 \pi i d\varphi_\tau$. Thus by \cref{eq:deg2},
\begin{align}
\hol_\star \left( \left[ \Gamma \right] \right) \left( \left[ \gamma \right] \right) &= \int\limits_I d \varphi_\tau + \frac{1}{2 \pi i} \int\limits_{\gamma - I} g_\tau^{-1} dg_\tau \\
&= \varphi_\tau \left( 1 \right) - \varphi_\tau \left( 0 \right) + O \left( \sqrt{\tau} \exp \left( - \frac{\sqrt{\tau} \dist \left( \rm{sh} \left( \Gamma \right), \{ j \left( 0 \right), j \left( 1 \right) \} \right)}{c} \right) \right).
\end{align}
By Statement~(2), this converges to 1 as $\tau \rightarrow \infty$.  On the other hand, $\hol_\star \left( \left[ \Gamma \right] \right) \left( \left[ \gamma \right] \right)$ is an integer, so it had to be 1 for all $\tau > \tau_0$.  The same argument can be used in the case of $\rm{sh} \left( \Gamma \right) = \widetilde{\beta}_i$ and $\gamma = \alpha_i$, which completes the proof of Statement~(4) and the \hyperlink{main:hol}{Main Theorem}.
\end{proof}

\smallskip

\begin{cor} \label{cor:htpy}
For all $\tau > \tau_0$, the space $\P_\tau$ is an infinite-dimensional vector bundle over a connected, oriented and smooth manifold without boundary, $\widehat{\M}_\tau$.  This manifold has real dimension $2 d + 1$ and is a $\U \left( 1 \right)$-principal bundle over the universal cover of $\M_\tau$.  In particular $\P_\tau$ is homotopy retracts to $\widehat{\M}_\tau$.
\end{cor}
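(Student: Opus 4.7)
The plan is to construct $\widehat{\M}_\tau$ as the quotient of $\P_\tau$ by a contractible normal subgroup of $\G$. Recall from \eqref{eq:ses} that $\G$ has identity component $\G_0$ and $\G/\G_0 \cong H^1(\Sigma;\Z)$. Pick an $L^2$-orthogonal splitting $W^{2,2}(\Sigma;i\rl) = i\rl \oplus W^{2,2}_0(\Sigma;i\rl)$ into constants and mean-zero functions, and let $V = \exp\bigl(W^{2,2}_0(\Sigma;i\rl)\bigr) \subset \G_0$. Since mean-zero functions contain no nonzero constants, the exponential is an isomorphism of topological groups, so $V$ is a (contractible) Banach space, and the constants yield a complementary $\U(1) \subset \G_0$. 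Because $\G$ is abelian, $V$ is normal in $\G$, and one has a canonical identification
\begin{equation*}
\G/V \ \cong \ \U(1) \times H^1(\Sigma;\Z),
\end{equation*}
where the extension splits on the Lie-algebra level (the discrete factor is canonically identified by sending $[g] \in \pi_0(\G)$ to its cohomology class). Define
\begin{equation*}
\widehat{\M}_\tau \ := \ \P_\tau / V.
\end{equation*}

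Since the $\G$-action on $\P_\tau$ is free and smooth (all $\tau$-vortices are irreducible for $\tau > \tau_{\scriptscriptstyle 0}$) and $V$ is a closed normal vector subgroup, the quotient $\widehat{\M}_\tau$ is a smooth Banach manifold, and $\P_\tau \to \widehat{\M}_\tau$ is a principal $V$-bundle.  Because $V$ is a vector space under $\exp$, this principal bundle is canonically an infinite-dimensional vector bundle, which yields the desired deformation retraction of $\P_\tau$ onto its zero section $\widehat{\M}_\tau$. The residual action of $\G/V$ makes $\widehat{\M}_\tau \to \M_\tau$ into a principal $(\U(1) \times H^1(\Sigma;\Z))$-bundle, in particular a smooth, finite-dimensional real manifold of dimension $\dim \M_\tau + \dim(\G/V) = 2d + 1$. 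Connectedness of $\widehat{\M}_\tau$ follows from the long exact homotopy sequence of $\G \to \P_\tau \to \M_\tau$: the connecting map $\pi_1(\M_\tau) \to \pi_0(\G)$ is the Berry monodromy, which is exactly $\hol_\star$, and by Statement~(4) of the \hyperlink{main:hol}{Main Theorem} this is an isomorphism, forcing $\pi_0(\P_\tau) = 0$ and hence $\pi_0(\widehat{\M}_\tau) = 0$.

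The main nontrivial step is identifying the base of the residual $\U(1)$-fibration with the universal cover $\widetilde{\M}_\tau$.  Split the principal $(\U(1) \times H^1(\Sigma;\Z))$-bundle as
\begin{equation*}
\widehat{\M}_\tau \ \xrightarrow{\U(1)} \ \widehat{\M}_\tau/\U(1) \ \xrightarrow{H^1(\Sigma;\Z)} \ \M_\tau.
\end{equation*}
The second arrow is a regular covering whose classifying homomorphism $\pi_1(\M_\tau) \to H^1(\Sigma;\Z)$ is by construction the map on $\pi_0$ induced by the $\G$-bundle $\P_\tau \to \M_\tau$ together with its Berry connection, i.e.\ it coincides with $\hol_\star$ under the Hurewicz isomorphism $\pi_1(\M_\tau) \cong H_1(\Sym^d(\Sigma);\Z) \cong H_1(\Sigma;\Z)$ of \eqref{eq:hur}. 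By Statement~(4), $\hol_\star$ is Poincar\'e duality and hence an isomorphism, so this covering has trivial deck kernel and is therefore the universal covering $\widetilde{\M}_\tau \to \M_\tau$. Thus $\widehat{\M}_\tau \to \widetilde{\M}_\tau$ is the desired $\U(1)$-principal bundle. Orientability of $\widehat{\M}_\tau$ follows from the orientability of the K\"ahler manifold $\widetilde{\M}_\tau$ and of the $\U(1)$-fiber.

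The main obstacle is the third paragraph: correctly matching the bundle-theoretic deck transformations of the discrete quotient $\widehat{\M}_\tau/\U(1) \to \M_\tau$ to the analytically defined Berry holonomy $\hol_\star$.  Once this identification is carried out, every other ingredient is a routine application of the slice theorem for free group actions and standard principal-bundle reasoning; the key analytic input for the whole corollary is already encoded in Statement~(4) of the \hyperlink{main:hol}{Main Theorem}.
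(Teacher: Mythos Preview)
Your proof is correct and closely parallel to the paper's, though packaged differently. The paper constructs $\widehat{\M}_\tau$ as the \emph{Coulomb slice}: fixing a base $\tau$-vortex field $\vor_{\scriptscriptstyle 0} = (\nabla_0,\phi_0)$, it takes $\widehat{\M}_\tau \subset \P_\tau$ to be the set of $\vor = (\nabla,\phi)$ with $d^*(\nabla - \nabla_0) = 0$. Your quotient $\P_\tau/V$ is canonically diffeomorphic to this slice, since solving $\Delta f = -i\,d^*(\nabla - \nabla_0)$ for the unique mean-zero $f$ furnishes a global section of your principal $V$-bundle; the paper's retraction $r(t,\vor) = \exp(itf_{\scriptscriptstyle\vor})(\vor)$ is precisely the linear contraction of your vector bundle onto that section.

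For connectedness the two arguments genuinely differ in style. The paper argues by hand: given $\vor,\vor' \in \P_\tau$, it joins their divisors by a regular path, parallel-transports along it, and then invokes Statement~(4) to manufacture a single-vortex loop whose holonomy cancels the residual $\pi_0(\G)$-discrepancy, yielding an explicit path in $\P_\tau$. Your long-exact-sequence argument is the abstract encapsulation of the same idea: the connecting homomorphism $\pi_1(\M_\tau) \to \pi_0(\G)$ is computed by any lift, in particular the Berry horizontal lift, so it coincides with $\hol_\star$; surjectivity from Statement~(4) then forces $\pi_0(\P_\tau)=0$, and injectivity makes $\P_\tau/\G_0$ simply connected. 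Your route is cleaner and makes the role of Statement~(4) more transparent; the paper's is more concrete. In both cases the only nontrivial analytic input is Statement~(4) of the Main Theorem.
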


\begin{proof}
First we will impose the Coulomb gauge:  fix a $\tau$-vortex field $\vor_0 = \left( \nabla, \Phi \right) \in \P_\tau$.  We say that $\vor = \left( \nabla', \Phi' \right)$ is in {\em Coulomb gauge with respect to} $\vor_0$ if the 1-form $a = \nabla' - \nabla$ is satisfies:
\begin{equation}
d^* a = 0.
\end{equation}
For each $\vor \in \P_\tau$ there is in fact a gauge transformation $g$ that is in the identity component of $\G$, and is unique up to constant gauge transformations (factors in $\U \left( 1 \right)$), such that $g \left( \vor \right)$ is in Coulomb gauge with respect to $\vor_0$.  A proof of this, which applies to our case too, can be found, for example, in \cite{EN11}*{Lemma~2.1}.  The set $\widehat{\M}_\tau \subset \P_\tau$, called the {\em Coulomb slice}, consisting the $\tau$-vortex fields that are in Coulomb gauge with respect to $\vor_0$ intersects each fiber.

Fix then a point $x \in \Sigma$, and require $g \left( x \right) = 1$.  Such a $g = g_\vor$ is then unique, moreover, can be written as $g_\vor = \exp \left( i f_\vor \right)$, and $f_\vor$ is also unique if one prescribes $f_\vor \left( x \right) = 0$.  Set $g_{t, \vor} = \exp \left( i t f_\vor \right).$  Then the map defined as $r \left(t, \vor \right) = g_{t, \vor} \left( \vor \right)$ is a homotopy retraction of $\P_\tau$ to the Coulomb slice.

The intersection of each fiber with the Coulomb slice is a collection of circles, due to the $\U \left( 1 \right)$ ambiguity mentioned above.  Moreover, these circles are in bijection with $\pi_0 \left( \G \right) \cong \pi_1 \left( \M_\tau \right)$.  Thus $\widetilde{\M}_\tau = \widehat{\M}_\tau / \U \left( 1 \right)$ is a $\pi_1 \left( \M_\tau \right)$-cover of $\M_\tau$, which is the universal cover if connected.

Our \hyperlink{main:hol}{Main Theorem} implies that $\P_\tau$ is connected, by the following argument: let $\vor$ and $\vor'$ be two arbitrary $\tau$-vortex fields.  Since simple divisors are dense in $\Sym^d \left( \Sigma \right)$, we can assume, that the corresponding divisors are simple.  Join the two divisors by a regular path $\Gamma_0$.  Then $\hol_{\Gamma_0} \left( \vor \right)$ is equal to $g \left( \vor' \right)$ for some $g \in \G$. If $g$ is not in the identity component of $\G$, then it represents a non-zero cohomology class $\left[ g \right] \in H^2 \left( \Sigma; \Z \right)$. Let $\gamma$ be a smooth loop based at a divisor point of $\vor$ that represents the Poincar\'e dual of $\left[ g \right]$; let $\widehat{\gamma}$ be the induced single vortex loop, and set $\Gamma = \widehat{\gamma}^{-1} * \Gamma_0$.  Now $\vor$ and $\hol_\Gamma \left( \vor \right) = \vor''$ are connected by the path in $\P_\tau$ given by parallel transport. On the other hand, $\vor''$ and $\vor$ are gauge equivalent, and the connecting gauge transformation is in the identity component of $\G$, which means that there is a path from $\vor''$ to $\vor$.  Thus $\P_\tau$ is connected, but then so is $\widehat{\M}_\tau$, which completes the proof. 
\end{proof}

\smallskip

\section{The Large Area Limit}
\label{sec:area}

Consider the \hyperref[eq:glf]{energy \eqref{eq:glf}} for the critical coupling constant $\lambda = 1$ and with $\tau = 1$.  Bradlow's criterion for the existence of irreducible vortices in this case becomes
\begin{equation}
\tau_0 = \frac{2 \pi d}{\Ar (\Sigma)} < 1,  \label[ineq]{ineq:bra}
\end{equation}
using the area with respect to the given area 2-form $\omega$. Even when \cref{ineq:bra} does not hold for $\omega$, it still holds for $\omega_t = t^2 \omega$ if $t > t_0 = \sqrt{\tau_0}$.

\smallskip

Let $P_t$ be the space of all solutions of the 1-vortex equations with K\"ahler form $\omega_t$ for $t > t_0$. A pair $\left( \nabla, \Phi \right) \in \C_L \times \Omega^0_L$ is in $P_t$ if
\begin{subequations}
\begin{align}
i \Lambda_t F_\nabla &= \frac{1}{2} \left( 1 - \left| \Phi \right|^2 \right) \label{eq:newvo1} \\
\del_\nabla \Phi &= 0,  \label{eq:newvo2}
\end{align}
\end{subequations}
where $\Lambda_t = \Lambda/t^2$.  Let $M_t$ be the corresponding moduli space $P_t / \G$.  Bradlow's Theorem still holds, hence $M_t \cong \Sym^d \left( \Sigma \right)$, where the diffeomorphism is again given by the divisor of the $\Phi$-field.

By \cite{B90}*{Proposition~5.1}, the following diagram is commutative when $t^2 = \tau$:
$$
\xymatrix{
P_t \ar[rr]^{\Psi_t} \ar[rd] & & \P_\tau \ar[ld] \\
&\Sym^d \left( \Sigma \right)},
$$
where $\Psi_t$ is the isomorphism of principal bundles given by $\Psi_t \left( \nabla, \Phi \right) = \left( \nabla, t \Phi \right)$.

The $L^2$-metric on $P_t$ is defined by \cref{eq:riem}, but with Hodge operator and area form given by $\omega_t$.  For all $X \in T P_t$, we now have:
\begin{equation}
\| \left( \Psi_t \right)_* X \|_{\P_\tau} = t \| X \|_{P_t}.  \label{eq:norm}
\end{equation}
Thus the $L^2$-metric of $P_t$ is conformally equivalent to the pullback of the $L^2$-metric of $\P_\tau$ via the bundle isomorphism $\Psi_t$.

\smallskip

The Berry connection on $P_t \rightarrow M_t$ is again defined as the orthogonal complement of the vertical subspaces, hence it is the same as the pullback of the Berry connection on $\P_\tau$ via $\Psi_t$.  Thus the results of \Cref{thm:tan,thm:cur} and our \hyperlink{main:hol}{Main Theorem} hold in the {\em large area limit} $\left( t \rightarrow \infty \right)$:

\smallskip

\begin{main2}
The conclusions of the \hyperlink{main:hol}{Main Theorem} in the introduction hold for the principal $\G$-bundle $P_t \rightarrow M_t$ with $\tau$ replaced everywhere by $t^2$.
\end{main2}

\bibliography{references}

\end{document}